\documentclass{article}
\usepackage{amsmath,amsxtra,amssymb,amsthm,amsfonts}
\usepackage{graphicx, tikz} 
\usepackage{color,graphicx}
\usepackage[margin=1.05in]{geometry}
\usepackage{array}
\usepackage[T1]{fontenc}
\usepackage{tikz}
\usepackage{hyperref}
\usepackage{cleveref}

\numberwithin{equation}{section}
\newtheorem{lemma}{Lemma}[section]
\newtheorem{prop}[lemma]{Proposition}
\newtheorem{theorem}[lemma]{Theorem}
\newtheorem{cor}[lemma]{Corollary}

\newtheorem{example}[lemma]{Example}
\newtheorem{definition}[lemma]{Definition}
\newtheorem{question}[lemma]
{Question}

\newcommand{\C}{\mathbb{C}}

\def\ket#1{| #1 \rangle}
\def\bra#1{\langle #1 |}
\def\kb#1#2{|#1\rangle\!\langle #2 |}
\def\bk#1#2{\langle #1 \vert #2 \rangle}

\def\be{\begin{eqnarray}}
\def\ee{\end{eqnarray}}
\def\bee{\begin{eqnarray*}}
\def\eee{\end{eqnarray*}}

\def\ot{\otimes}
\renewcommand{\H}{{\mathcal H}}

\begin{document}

\title{Graph Structures for Local Distinguishability of Quantum Product States}

\author{ 
Sooyeong Kim\textsuperscript{1,2}, David W. Kribs\textsuperscript{1}, Michael Nathanson\textsuperscript{3}, Rajesh Pereira\textsuperscript{1}, and Sarah Plosker\textsuperscript{2}
}

\maketitle

\begin{abstract}
We consider the problem of distinguishing sets of quantum product states with local operations and classical communication (LOCC). Recent work has used graph theory to identify sets of product states  distinguishable with one-way LOCC. We extend these efforts to full two-way LOCC, with the first significant analysis of the set of graphs corresponding to bipartite product states that can be distinguished with two-way protocols after finitely many steps. We derive basic closure properties of the set of distinguishable graphs and identify some classes of graphs that guarantee local distinguishability and some graphs that do not. We also include several examples and forward-looking comments.

    \medskip

    \noindent \textbf{Keywords:} quantum product states; quantum state distinguishability; local operations and classical communication (LOCC); graph theory. \\
	
	\noindent \textbf{MSC2020 Classification:}  
  05C90; 
  15B48; 
 81P45; 
 81P48 

\end{abstract}

\addtocounter{footnote}{1}
\footnotetext{Department of Mathematics \& Statistics, University of Guelph, Guelph, ON, Canada  N1G 2W1}
\addtocounter{footnote}{1}
\footnotetext{Department of Mathematics \& Computer Science, Brandon University, Brandon, MB, Canada R7A 6A9}  
\addtocounter{footnote}{1}
\footnotetext{Department of  Mathematical Sciences, Bentley University, Waltham, MA, USA 02452}


\section{Introduction}

Quantum state distinguishability is a central topic in the theory of quantum communication. A core problem in the subject is to determine the distinguishability (or indistinguishability) of sets of quantum states shared amongst multiple parties. Of particular interest, in both theory and applications, is the paradigm in which each party can implement local (quantum) measurements but only classical communication is available between the parties. This restricted type of communication is denoted LOCC \cite{bennett1999quantum,ghosh2004distinguishability,horodecki2003local,chefles2004condition} and includes key topics such as quantum teleportation, data hiding, and several of their derivations \cite{Teleportation, terhal2001hiding,eggeling2002hiding}. 

Entanglement is the most famous example of quantum weirdness, but it is not necessary to observe quantum phenomena. Quantum (tensor) product states are precisely those pure states that are not entangled, and so, at least in principle, are easier to handle mathematically. Nevertheless, as demonstrated in the seminal quantum communication work \cite{bennett1999quantum}, there exist sets of orthogonal product states, including complete product bases, that cannot be perfectly distinguished with LOCC. This phenomenon is known as `quantum nonlocality without entanglement'.  
This implies that there exist sets of quantum states that can be encoded using only classical communication followed by local quantum operations but which cannot then be decoded using only LOCC. This counter-intuitive phenomenon has motivated interest in determining whether or not a given set of product states is locally distinguishable since the earliest days of quantum information. 

Recently, techniques from graph theory have been brought to the table for this problem \cite{kribs2020vector,kribs2021operator,KMNP2024}. The basic idea is that, in the case of quantum product states distributed over multiple parties, there is an orthogonal representation of the graph assigned to each party by the orthogonality relations on their parts of the states. In a bipartite system, with two parties called Alice and Bob, the product states define orthogonal representations of graphs for Alice's (with graph denoted by $G_A$) and Bob's (with graph denoted by $G_B$) parts of the states. One-way LOCC is a special case of interest that has received considerable attention \cite{Walgate-2000,nathanson2005distinguishing,fan2004distinguishability,Nathanson2013,cosentino2013small,yu2012four,kribs2017operator, kribs2019quantum,kribs2020one}, which in a bipartite system is defined by the requirement that Alice must complete her measurement before Bob can start his. 
In \cite{KMNP2024}, results were obtained for one-way LOCC in a bipartite system, with chordal graphs identified as a key graph structure associated with guaranteeing one-way distinguishability. It was also shown that the graphs on their own do not determine distinguishability for the general case. 

In this paper, we undertake the first comprehensive effort to extend this graph theoretic approach to full two-way LOCC in the bipartite case; that is, scenarios in which Alice and Bob can make local partial measurements and communicate as much as they like. 
Thus, the basic motivating question we consider in this work is the following: 
 
\begin{question}\label{question original} Given a set of bipartite product states $S$ on a finite-dimensional Hilbert space $\mathcal H_A \otimes \mathcal H_B$ with orthogonal graph representations $(G_A, G_B)$, how, and to what extent, can we determine from graph theory whether the elements of $S$ can be distinguished with LOCC? 
\end{question}

We will expand on this question and surrounding details in the next section, which includes relevant preliminaries and an extended background discussion on LOCC state distinguishability. In Section~3, we briefly revisit the one-way case, with an eye toward extending it, and then we define the relevant set of graphs for full LOCC, what we call the {\it distinguishable graphs}, and determine some of its easily identifiable subsets. Section~4 includes our main results and observations on the structure of the set of distinguishable graphs, with an emphasis on identifying its closure properties and related important classes of graphs.  
We also present several examples throughout the work, and in Section~5 we conclude with some forward looking remarks.

\section{Preliminaries and Background}

In this section, we introduce the necessary background material and notation, and then we refine the above motivating question as part of a short introductory discussion on LOCC distinguishability of product states and graph theory. 

\subsection{Quantum Measurements and LOCC}

The quantum systems we consider will be represented by a finite-dimensional Hilbert space $\mathcal H$, which is made up of two subsystems representing the quantum state space of two physically separated parties, call them Alice and Bob. Mathematically, we have the joint system $\mathcal H=\mathcal H_A\otimes \mathcal H_B$ where $\mathcal H_A$ and $\mathcal H_B$ are finite-dimensional complex vector spaces.  A positive operator valued measure (POVM) is a set of positive (semidefinite) operators $E_k$ that sum to the identity. Local operations and classical communication (LOCC) is a natural subclass of quantum operations whereby the parties involved in the communication are restricted to act locally on their respective subsystems (i.e., Alice can only act on $\mathcal H_A$ and similarly for Bob) by performing measurements and
quantum operations, and the 
parties are free to communicate any classical data \cite{chitambar2014everything}. 

Quantum states are unit length vectors, written in Dirac ket form $\ket{\psi}$ (with dual functionals written in bra form as $\bra{\psi}$). Product states in $\mathcal H$ can be written as tensor products of states on the individual systems:  $\ket{\psi}=\ket{\psi_A}\otimes \ket{\psi_B}$. The basic setup in the LOCC state distinguishability framework is that the system $\mathcal H$ has been prepared in a pure state $\ket{\psi_k}$ from a known set of states $\mathcal S=\{\psi_i\}\subseteq \mathcal H$. Alice and Bob would like to determine the value of $k$ using only LOCC. 

One-way LOCC, for example, refers to schemes for which the classical information can only flow in one direction, and full multi-directional LOCC entails no restrictions on the classical communication flows. More precisely, a one-way LOCC measurement with Alice going first (i.e., only Alice can communicate classical information of the two parties) is of the form $\mathbb M = \{ A_k \otimes B_{k,j} \}_{k,j}$, with the $A_k$'s and $B_{k,j}$'s being  positive operators satisfying $\sum_k A_k = I_A$ and $\sum_{j} B_{k,j} = I_B$ for each $k$ \cite{KMNP2024}. Without loss of generality, each of the $A_k$ and $B_{k,j}$ can be assumed to be rank-one \cite{Nathanson2013}. The operators define the one-way protocol as follows: Alice performs her measurement, communicates the result $k$ to Bob, who then performs his measurement associated with the classical index $k$. Within that measurement, the outcome $j$ allows him to identify the state.  

\subsection{Graph Theory Basics}

A  graph $G=(V, E)$ consists of a set $V$ of vertices and a set $E$ of edges (pairs of vertices). If $(u,v)\in E$,   we say that vertices $u$ and $v$ are adjacent in $G$, and we write $u\sim v\in G$. Here, we  consider only simple graphs: unweighted (edges are all assigned the number 1), undirected (edges are unordered pairs) graphs with no loops (edges $(u,v)$ satisfy $u\neq v$). The complement of a graph $G$ is $\overline{G}=(V, \overline{E})$ where $(u,v)\in \overline{E}$ if and only if $(u,v)\notin E$. A graph $G' = (V',E')$ is a subgraph of $G$, denoted $G'\leq G$, provided $V'\subseteq V$, $E'\subseteq E$, and $(u,v)\in E'$ implies $u,v\in V'$. An induced subgraph $G'$  of a graph $G$  is a subgraph for which $u,v, \in V'$ and $(u,v)\in E$ implies that $(u,v) \in E'$ (all edges in $E$ connecting vertices in $V'$ are also edges in $E'$) and is written $G' = G[V']$. A spanning subgraph $G'$ of $G$ satisfies $V'=V$ (but not necessarily $E'=E$). Vertices  $u$ and $v$ are neighbours in $G$ if $(u,v) \in E$; the set of neighbours of $u$ (excluding $u$) is denoted $N_G(u)$; if we wish to include $u$ in its set of neighbours, this is called the closed neighbourhood of $u$, denoted $N_G[u]$. 

The {\it complete graph} on $n$ vertices is the graph $K_n=(\{v_1, \dots v_n\}, E)$ where $(u,v)\in E$ provided $u\neq v$ (i.e., there is an edge between every pair of vertices). The \emph{path} on $n$ vertices is a graph $P_n=(\{v_1, \dots, v_n\}, E)$ for which the $v_i$ are distinct and  $E=\{(v_i, v_{i+1})\,:\, 1\leq i\leq n-1\}$.  Two vertices $u$ and $v$ in a graph $G$ are connected if $G$ contains  a path from $u$ to $v$. A graph is {\it connected} if there is a path between every pair of vertices. If a graph is not connected then it is disconnected. A totally disconnected graph is a graph whose edge set is empty (note that the totally disconnected graph on $n$ vertices, often written as $O_n$,  can also be written as $\overline{K_n}$).  The \emph{cycle} on $n$ vertices is a graph $C_n=(\{v_1, \dots, v_n\}, E)$ for which $E=\{(v_i, v_{i+1})\,:\, 1\leq i\leq n-1\}\cup \{(v_n, v_1)\}$. A chord of a cycle is an edge $(u,v)$ that does not belong to the edge set of the cycle, but $u$ and $v$ are in the vertex set of the cycle.  A \emph{chordal graph} $G$ is a graph in which every cycle $C_n\leq G$ with $n\geq 4$ has a chord. 

A {\it clique} $C$ of a graph $G$ is a subset of vertices of $G$ (i.e., $C\subseteq V$) such that $u, v\in C$, $u\neq v$, implies $(u,v)\in E$. Equivalently, the subgraph of $G$ induced by $C$ is a complete graph.  A {\it coclique}, also called an {\it independent set}, is a subset $S$ of the vertices of a graph $G$ such that $u, v\in S$, $u\neq v$, implies $(u,v)\notin E$. A {\it split graph} is a graph for which the vertices can be partitioned into a clique and a coclique. It can easily be seen that any split graph is chordal.  The converse is known to be false; the path on five vertices is a chordal graph which is not a split graph.
A {\it clique separating set} of $G$ is a subset of vertices such that the induced subgraph defined by the vertices forms a clique and the removal from $G$ of the vertices and the edges that connect to them is a disconnected graph. 
A set of graphs $\{ G_i = (V_i, E_i) \}$ {\it covers} $G$ if $V = \cup_i V_i$ and $E = \cup_i E_i$. A collection of graphs $\{ G_i \}$ is a {\it clique cover} for $G$ if $\{G_i\}$ covers $G$ and if each of the $G_i$ is a complete graph (i.e.,
a clique). The {\it clique cover number} $\mathrm{cc}(G)$ is the smallest possible number of subgraphs contained in a clique cover of $G$. A clique cover can be thought of as a collection of (not necessarily disjoint) induced subgraphs of $G$, each of which is a complete graph with the condition that every edge is contained in at least one of the cliques.

\subsection{Orthogonal Representations and LOCC Protocols}

We now explicitly connect quantum measurements and graph theory: 
\begin{definition}
    For a given graph $G= (V,E)$, a function $\phi: V \rightarrow \mathbb{C}^d\backslash \{0\} $ is an {\it orthogonal representation} of $G$ if for all distinct vertices $u,v\in V$, we have
$$
u \not\sim v \iff \langle \phi(u), \phi(v) \rangle = 0.
$$
\end{definition}
Any set of bipartite product states $\{ \ket{\psi^A_k}\otimes \ket{\psi^B_k} \}_{k=1}^r$ on $\mathcal H_A \otimes \mathcal H_B$ can be associated with a pair of graphs. The orthogonality graph from Alice's perspective, called \emph{Alice's graph}, is the unique graph $G_A$ with vertex set $V = \{ 1,2, \ldots, r\}$ such that the map $k \mapsto  \ket{\psi^A_k}$ is an orthogonal representation of $G_A$. Bob's graph $G_B$ can be defined similarly. Observe that the product states are mutually orthogonal if and only if $G_A \le \overline{G_B}$, or equivalently vice-versa. 

Continuing our discussion from above, note that given any pair of graphs $(G_A, G_B)$ with $G_A \le \overline{G_B}$, there always exist corresponding sets of product states that can be distinguished with local operations and \emph{no} mid-measurement communication: we can see this by constructing a vertex clique cover of $G_A$ and identifying each clique with an element of the standard basis. We then let Alice's states be the characteristic vectors of each vertex. We can build a similar representation for Bob's graph. If Alice and Bob each measure in the standard basis, then their pair of outcomes $(a,b)$ uniquely determines the original state. (This process can be easily extended to multipartite systems.)

As a result of this observation, we can refine Question~\ref{question original} in the context of full LOCC. 
\begin{question}
Are there pairs of graphs $G_A \le \overline{G_B}$ such that \emph{every} set of orthogonal bipartite product states $S$ with  representation $(G_A, G_B)$  is distinguishable with (full) LOCC?
\end{question}
\begin{question}\label{question indistinguishable}
Are there pairs of graphs  $G_A \le \overline{G_B}$ and dimensions $d_A$ and $d_B$ such that \emph{no} set of orthogonal bipartite product states $S \subset \C^{d_A} \ot \C^{d_B}$ with representation $(G_A, G_B)$  is distinguishable with LOCC?
\end{question}

For the benefit of readers with limited exposure to the topic, we discuss LOCC protocols in more detail, at the level of explicit measurement steps and classical communication. Our aim is to give an intuitive understanding of how one-way and two-way LOCC discrimination proceeds in practice, via presentation of an example, before introducing formal criteria. 
Recall that a set of product states 
$
\{\,\ket{\psi_k^A} \otimes \ket{\psi_k^B}\,\}_{k=1}^r
$
is distinguishable by LOCC if Alice and Bob can identify the given state with certainty using only local measurements and classical communication. Further, Alice and Bob are assumed to know the entire set of states in advance and to agree beforehand on a fixed LOCC protocol, including the order of measurements and all conditional measurement choices based on classical outcomes. The set is said to be distinguishable if, for every $k$, the protocol identifies the state $\ket{\psi_k^A} \otimes \ket{\psi_k^B}$ with probability $1$.

\begin{example}\label{1way2wayeg}
{\rm 
Consider the following five orthogonal product states in $\mathbb{C}^2 \otimes \mathbb{C}^3$, where $\ket{\pm} = (\ket{0} \pm \ket{1})/\sqrt{2}$:
\begin{equation*}
\begin{aligned}
\psi_1 = \ket{1} &\otimes \ket{1}, \quad
\psi_2 = \ket{0} \otimes \ket{+}, \quad
\psi_3 = \ket{0} \otimes \ket{-}, \\
&\psi_4 = \ket{+} \otimes \ket{2}, \quad
\psi_5 = \ket{-} \otimes \ket{2} .
\end{aligned}
\end{equation*}
Alice and Bob are given one of these states and must identify it using only LOCC.

As we have discussed above, we may associate graphs to Alice and Bob in which vertices correspond to indices of product states (or local states) and edges represent local non-orthogonality relations among them. See Figure~1 for a graph representation of the example. As noted in the next section, the resulting pair of graphs captures which pairs of states can or cannot be distinguished locally. Before turning to that discussion, let us explicitly describe the different types of distinguishability that apply in this example.  

\begin{figure}[h!]\label{fig:alice bob graphs}
	\begin{center}
		\begin{tikzpicture}[scale=1.3, baseline=(current bounding box.north)]
		\draw[thick] (-1,0.4)--(1,0.4);
		
		\draw[thick] (-1,0.4)--(-0.6,-1);
		\draw[thick] (-1,0.4)--(0.6,-1);
		
		\draw[thick] (1,0.4)--(-0.6,-1);
		\draw[thick] (1,0.4)--(0.6,-1);
		
		\draw[thick] (0,1)--(-0.6,-1);
		\draw[thick] (0,1)--(0.6,-1);
		
		\draw[fill] (0,1) node[anchor=south] {$1$} circle [radius=2pt];
		\draw[fill] (-1,0.4) node[anchor=east] {$2$} circle [radius=2pt];
		\draw[fill] (1,0.4) node[anchor=west] {$3$} circle [radius=2pt];
		\draw[fill] (-0.6,-1) node[anchor=north] {$4$} circle [radius=2pt];
		\draw[fill] (0.6,-1) node[anchor=north] {$5$} circle [radius=2pt];
		
		\node at (0,1.8) {Alice graph $G_A$};
		
		\begin{scope}[xshift=5.5cm]

		\draw[thick] (0,1)--(-1,0.4);
		\draw[thick] (0,1)--(1,0.4);
		\draw[thick] (-0.6,-1)--(0.6,-1);

		\draw[fill] (0,1) node[anchor=south] {$1$} circle [radius=2pt];
		\draw[fill] (-1,0.4) node[anchor=east] {$2$} circle [radius=2pt];
		\draw[fill] (1,0.4) node[anchor=west] {$3$} circle [radius=2pt];
		\draw[fill] (-0.6,-1) node[anchor=north] {$4$} circle [radius=2pt];
		\draw[fill] (0.6,-1) node[anchor=north] {$5$} circle [radius=2pt];
		
		\node at (0,1.8) {Bob graph $G_B$};
		
		\end{scope}
		
		\end{tikzpicture}
	\end{center}
	\caption{Graph representation of local orthogonality relations in Example~\ref{1way2wayeg}.}
\end{figure}

This set can be distinguished by a Bob-first one-way LOCC protocol. Bob first performs a local projective measurement in the basis $\{\ket{+},\ket{-},\ket{2}\}$. If the outcome is $\ket{2}$, the remaining possible states are $\psi_4$ and $\psi_5$, which Alice can distinguish by measuring in the basis  $\{\ket{+},\ket{-}\}$. If the outcome is $\ket{+}$, the remaining possible states are $\psi_1$ and $\psi_2$; if the outcome is $\ket{-}$, the remaining possible states are $\psi_1$ and $\psi_3$. In either case, Alice can distinguish the remaining states by measuring in the computational basis $\{\ket{0},\ket{1}\}$.

This set can also be distinguished via a Bob-first two-way LOCC protocol. Bob first performs a local projective measurement with outcomes corresponding to the operators $\ket{2}\bra{2}$ and $\ket{0}\bra{0} + \ket{1}\bra{1}$
(from Bob’s perspective, this measurement leaves all states unchanged; other measurements could change the post-measurement states). Bob then communicates the outcome to Alice. If Bob’s outcome corresponds to $\ket{2}\bra{2}$, the state must be either $\psi_4$ or $\psi_5$, which Alice distinguishes by measuring in the basis $\{\ket{+},\ket{-}\}$. If Bob’s outcome instead corresponds to the subspace spanned by $\ket{0}$ and $\ket{1}$, the remaining possible states are $\psi_1$, $\psi_2$, and $\psi_3$. Alice then measures in the basis $\{\ket{0},\ket{1}\}$. The outcome $\ket{1}$ identifies $\psi_1$, while the outcome $\ket{0}$ leaves $\psi_2$ and $\psi_3$ indistinguishable to Alice. In this case, Alice communicates her outcome to Bob, who performs a final measurement in the basis  $\{\ket{+},\ket{-}\}$ to distinguish $\psi_2$ from $\psi_3$.

In an Alice-first one-way LOCC protocol, each outcome of Alice’s measurement must leave Bob with states that are perfectly distinguishable. Since $\psi_4$ and $\psi_5$ have identical states on Bob’s subsystem, no outcome can contain both, so each outcome must exclude either $\ket{+}$ or $\ket{-}$ on Alice’s side. Moreover, since Bob cannot distinguish $\ket{1}$ from $\ket{+}$ or $\ket{-}$, Alice cannot allow $\ket{0}$ and $\ket{1}$ to appear together in any outcome. However, any outcome that contains $\ket{1}$ while excluding $\ket{0}$ necessarily contains both $\ket{+}$ and $\ket{-}$. This forces $\psi_4$ and $\psi_5$ to remain in the same outcome, a contradiction. Hence, Alice-first one-way LOCC is impossible for this set. Nevertheless, as noted above, the set is distinguishable by Alice-first two-way LOCC since Bob can initiate a one-way protocol while Alice takes no initial action.

}
\end{example}

\section{From One-Way to Full LOCC Distinguishability}

In this section, we first present some recent results and a series of examples on one-way LOCC distinguishability, with an eye toward extending the graph theory approach to full LOCC. We then define and obtain initial results on the set of distinguishable graphs.

\subsection{Structural Conditions for One-Way LOCC}	

As described above, orthogonal (bipartite) quantum product states can be described by graph representations that capture non-orthogonality relations on Alice’s and Bob’s subsystems. This raised a structural question: given only the associated graphs, can we determine whether the set of states is perfectly distinguishable without specifying the concrete states? This has been studied in the context of one-way LOCC distinguishability \cite{kribs2020vector,kribs2021operator,KMNP2024}, where the answer was found at the highest level to be: sometimes.

An early result in LOCC distinguishability \cite{bennett1999unextendible,Walgate-2002,divincenzo2003unextendible,halder2020distinguishability} shows that any set of orthogonal product states in $\mathbb{C}^{2} \otimes  \mathbb{C}^{d}$ for arbitrary $d\geq 2$ can be distinguished  via full (two-way) LOCC. As proved in \cite{Walgate-2002}, the states can always be written in a form in terms of orthogonality on  Alice's side versus the corresponding orthogonalities on Bob's side that allow for distinguishability. On the other hand, for one-way LOCC this `single-qubit sender' scenario can result in states that are indistinguishable as well. In fact, this case is one in which we have a complete graph theoretic characterization. 

\begin{theorem}\cite{kribs2021operator,KMNP2024}\label{singlequbit}
	A set of orthonormal product states in $\mathbb{C}^{2} \otimes  \mathbb{C}^{d}$, for~$d \geq 2$, is distinguishable via one-way LOCC with Alice going first if and only if there is some graph between the two graphs $G_A$ and $\overline{G_B}$ with a clique cover number of at most two; that is, there is a graph $G$ such that
	\begin{equation*}
	G_A \leq G \leq \overline{G_B} \quad \mathit{and} \quad \mathrm{cc}\,(G) \leq 2.
	\end{equation*}
\end{theorem}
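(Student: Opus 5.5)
We prove the two directions separately, using the normal form of one-way measurements from the preliminaries: with Alice first, we may take $\mathbb M=\{A_k\otimes B_{k,j}\}$ with every $A_k=\kb{a_k}{a_k}$ rank one and $\sum_k A_k=I_2$. Since $\mathbb{C}^2$ is two-dimensional, the whole argument comes down to how a single vector $\ket{a}\in\C^2$ can be orthogonal to Alice's local states.

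\emph{Sufficiency.} Suppose $G_A\le G\le\overline{G_B}$ and $\mathrm{cc}(G)\le 2$, with $G$ covered by cliques $Q_1,Q_2$. Every vertex of $G$ lies in $Q_1\cup Q_2$, taking singleton cliques if needed. Alice measures in the computational basis $\{\ket{0},\ket{1}\}$. To see that this works, it suffices to realise the cover inside her representation, which we do by first applying a local unitary.

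The key observation concerns the two possible outcomes. After Bob hears outcome $k$, he must distinguish the states $\ket{\psi^B_i}$ over those $i$ with $\bk{a_k}{\psi^A_i}\neq0$. This is possible exactly when those $\ket{\psi^B_i}$ are pairwise orthogonal, i.e.\ when the surviving set is a coclique of $G_B$, equivalently a clique of $\overline{G_B}$. The difficulty is that $Q_1,Q_2$ are cliques of $G$, not necessarily sets surviving a basis outcome.

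So we argue directly. Since $G\ge G_A$, any two vertices that are nonadjacent in $G$ have orthogonal Alice states. If $G$ is complete, a single clique suffices: Alice does nothing and Bob's states are pairwise orthogonal. Otherwise, take $u\not\sim v$ in $G$. Then $\psi^A_u\perp\psi^A_v$, so up to phase these form an orthonormal basis $\{\ket{e_1},\ket{e_2}\}$, and Alice measures in this basis. The states that survive outcome $e_1$ are exactly those not orthogonal to $e_1$, i.e.\ not parallel to $e_2$. Each such state must lie in a clique of $\overline{G_B}$ together with the others that survive.

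To check this, use $\mathrm{cc}(G)\le2$: every non-edge of $G$ runs between $Q_1\setminus Q_2$ and $Q_2\setminus Q_1$. Any vertex $w$ not parallel to $e_2$ is non-orthogonal to $e_1$. If $w$ were in neither clique with $u$ we would get a contradiction, so the surviving set lies in one clique, which is a clique of $G\le\overline{G_B}$. I expect this case analysis, which shows that every vertex orthogonal to $u$ in $\C^2$ is parallel to $v$ and then controls the clique memberships, to be the main obstacle. The tool is the two-dimensional fact that orthogonality to a fixed vector is an equivalence relation ("parallel to $e_2$").

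\emph{Necessity.} Given a successful protocol with rank-one $\ket{a_k}$, let $S_k$ be the set of states surviving outcome $k$; as above, each $S_k$ is a clique of $\overline{G_B}$. Define $G$ as the graph whose edges are the pairs lying together in some $S_k$. Then $G\le\overline{G_B}$. Also $G_A\le G$: if $\psi^A_i\not\perp\psi^A_j$, then since $\sum_k\kb{a_k}{a_k}=I$, some $a_k$ is non-orthogonal to both. Indeed, otherwise every $a_k$ is orthogonal to $\psi^A_i$ or to $\psi^A_j$, and in $\C^2$ the only vectors with this property are multiples of $(\psi^A_i)^\perp$ or $(\psi^A_j)^\perp$. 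Then $\psi_i^A$ and $\psi_j^A$ would be the two eigenvectors of $I$, hence orthogonal, a contradiction.

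It remains to show $\mathrm{cc}(G)\le2$. The sets $S_k$ take only a few shapes. If $a_k$ is parallel to no $\psi^A_i$ and orthogonal to none, then $S_k=V$. Otherwise $S_k$ equals $V$ minus one class $[\psi]^\perp$ of Alice vectors parallel to a fixed direction. Either $V$ is already a clique, or two classes $P,P^\perp$ of mutually orthogonal directions give $S_k=V\setminus P$ and $S_{k'}=V\setminus P^\perp$. Every other possible $S_k$ removes a class $R$ and contains $P\cup P^\perp$, and I will argue that it is contained in the union of the previous two or is forced to be small. This yields a cover of $G$ by at most two cliques, completing the proof.
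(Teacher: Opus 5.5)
\textbf{Sufficiency.} This direction is correct once you delete the false start about the computational basis and a local unitary. Take $u\in Q_1\setminus Q_2$ and $v\in Q_2\setminus Q_1$ nonadjacent in $G$, and let Alice measure in the basis $\{\psi^A_u,\psi^A_v\}$. Any survivor $w$ of outcome $\psi^A_u$ is Alice-non-orthogonal to $u$. So $w=u$ or $w\sim u$ in $G_A\le G$, which puts $w$ in $Q_1$; symmetrically, survivors of outcome $\psi^A_v$ lie in $Q_2$. This is the easy part, not the main obstacle. The paper itself gives no proof of this statement: it cites \cite{kribs2021operator,KMNP2024} and remarks that it can be derived from Theorem~\ref{chordalthm}. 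So your argument has to stand on its own.

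\textbf{The gap is in necessity, at the one step that matters: $\mathrm{cc}(G)\le 2$.}

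You assert without proof the dichotomy ``either $V$ is a clique, or two outcomes give $S_k=V\setminus P$ and $S_{k'}=V\setminus P^\perp$ for orthogonal direction classes.'' Proving it requires the completeness relation $\sum_k\kb{a_k}{a_k}=I$, which you never use at this point.

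Your handling of the remaining outcomes is also not a valid argument. Every $S_k$ is trivially contained in $(V\setminus P)\cup(V\setminus P^\perp)=V$, but containment of vertex sets says nothing about covering edges. An outcome that removes a third class $R$ yields the clique $V\setminus R\supseteq P\cup P^\perp$. Its edges between $P$ and $P^\perp$ lie in neither of your two cliques, so those two cliques do not cover your union graph $G$ in that case.

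\textbf{What is needed.} Write $S_k=V\setminus C_k$, where $C_k$ is the (possibly empty) class of vertices $i$ with $\psi^A_i$ parallel to $a_k^\perp$. Note that $S_k=V$ exactly when $a_k$ is orthogonal to no $\psi^A_i$; parallelism is irrelevant.
\begin{itemize}
\item If some $C_k=\emptyset$, or at least three distinct classes occur, then any pair $\{x,y\}$ meets at most two classes and so lies in a common $S_k$. Hence $G_B$ is edgeless, your $G$ equals $K_n$, and $\mathrm{cc}(G)=1$.
\item A single class is impossible: all $\ket{a_k}$ would be parallel, and $\sum_k\kb{a_k}{a_k}$ would have rank one.
\item If exactly two classes occur, consisting of multiples of unit vectors $\ket{d_1},\ket{d_2}$, then $\alpha\kb{d_1^\perp}{d_1^\perp}+\beta\kb{d_2^\perp}{d_2^\perp}=I$ with $\alpha,\beta>0$. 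Comparing ranks forces $\alpha=\beta=1$ and $d_1\perp d_2$. Your $G$ is then exactly the union of the cliques on $V\setminus C_1$ and $V\setminus C_2$, and these also cover $V$ because $C_1\cap C_2=\emptyset$.
\end{itemize}
With this case analysis supplied, the proof is complete.

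Your step $G_A\le G$ is correct but roundabout. It follows in one line, in any dimension, from $\bk{\psi^A_i}{\psi^A_j}=\sum_k\bk{\psi^A_i}{a_k}\bk{a_k}{\psi^A_j}$.
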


Let us return to the example above and observe how the distinguishability conclusions there follow readily from this result. 

\begin{example}
{\rm 
As one can easily see from Figure~1, in the case of Example~\ref{1way2wayeg} 
we have $G_B = \overline{G_A}:=G$ and $\mathrm{cc}\,(G) = 2$. Thus it follows immediately from the result that the states are one-way distinguishable with Bob going first (as observed in the example discussion). On the other hand, $\mathrm{cc}\,(\overline{G}) = \mathrm{cc}\,(G_A) = 4$, and so the states are not one-way distinguishable with Alice going first, which was also observed above (even though they are two-way distinguishable with Alice going first). 
}
\end{example} 

More generally, the following graph-theoretic description has been obtained, linking chordal graph structure with one-way distinguishability of {\it all} orthogonal representations for which the graph lies between the associated Alice and Bob (complement) graphs. 

\begin{theorem}\label{chordalthm}\cite{KMNP2024}
	Let $G$ be a graph with $n$ vertices. Then $G$ is chordal if and only if every collection of $n$ product states having Alice graph $G_A$ and Bob graph $G_B$ with
	\[
	G_A \le G \le \overline{G_B}
	\]
	is distinguishable with one-way LOCC with Alice going first.
\end{theorem}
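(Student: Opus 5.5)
Both directions of Theorem~\ref{chordalthm} run through the structure of Alice-first one-way protocols. Fix rank-one Alice operators $A_k = \kb{a_k}{a_k}$; each outcome $k$ leaves Bob the set of states $j$ with $\langle a_k|\psi^A_j\rangle \neq 0$. This set must be pairwise orthogonal on Bob's side. Since Bob's part is unchanged, this condition says the surviving index set is a clique in $\overline{G_B}$.

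\textbf{Sufficiency ($G$ chordal $\Rightarrow$ distinguishable).} Assume $G$ is chordal, and let the states satisfy $G_A \le G \le \overline{G_B}$. It suffices to produce a resolution of the identity $\sum_k \kb{a_k}{a_k} = I_A$ such that, for each $k$, the set $\{j : \langle a_k|\psi^A_j\rangle \neq 0\}$ is a clique of $G$. Such a set is then a clique of $\overline{G_B}$, so Bob's conditional states are orthonormal and he can finish. I would proceed by induction on $n$ using a perfect elimination ordering.

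\emph{Elimination step.} A chordal graph has a simplicial vertex $v$, whose closed neighbourhood $N_G[v]$ is a clique. Let $W$ be the span of $\{\psi^A_j : j \neq v\}$. Then $\psi^A_v$ splits into a component in $W$ and a component in $W^\perp$. Alice first measures with the projections onto $W^\perp$ and onto $W$.
\begin{itemize}
\item An outcome inside $W^\perp$ has nonzero overlap only with $\psi^A_v$, so Alice's side has singled out $v$ and Bob has nothing to resolve.
\item If $\psi^A_v \in W$ entirely, this split does nothing, and I must instead use that $\psi^A_v$ is orthogonal to every $\psi^A_j$ with $j \notin N_G[v]$. I would then refine the remaining measurement by choosing rank-one vectors adapted to the induced subgraph $G - v$ (which is again chordal) via the induction hypothesis. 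Each such vector can pick up $v$ only together with indices that already lie in its clique, or with neighbours of $v$.
\end{itemize}

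\emph{Main obstacle.} The hard step is ensuring that adjoining $v$ to a measurement vector's support keeps that support a clique. The support must stay inside $N_G[v]$ whenever it contains $v$. I would first try to strengthen the induction hypothesis: each measurement vector is constructed inside the span of states forming a clique $C$ of $G$, with all of its overlaps confined to $C$. This mirrors the clique-tree decomposition of chordal graphs. I would then handle $v$ by working separately in the span of $\{\psi^A_j : j \in N_G[v]\}$ and in its orthocomplement within $W$.

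\textbf{Necessity ($G$ not chordal $\Rightarrow$ some set fails).} Suppose $G$ is not chordal. Then $G$ contains an induced cycle $C_m$ with $m \ge 4$. I would build states with $G_A = G$ and $G_B = \overline{G}$.
\begin{itemize}
\item On the cycle vertices, take Alice vectors in $\C^2$ realizing $C_m$ as a non-orthogonality graph (generic vectors whose only orthogonalities are the non-edges). For $m \ge 4$ every rank-one projection $\kb{a}{a}$ in $\C^2$ has nonzero overlap with at least two non-adjacent cycle vertices. At that outcome Bob's states fail to be orthogonal, violating the clique condition above. If a $\C^2$ realization is impossible, use a suitable low dimension in which the same overlap argument works, modelled on Theorem~\ref{singlequbit}: $\mathrm{cc}(C_m) > 2$ for $m \ge 5$, and $C_4$ gets a direct argument.
\item Extend to the rest of $G$ by a generic orthogonal representation in higher dimension, chosen so that the cycle part still obstructs every protocol.
\end{itemize}
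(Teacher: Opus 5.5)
\textbf{Sufficiency.} Both directions have a genuine gap. Here, the step you call the main obstacle is the whole content of this direction, and it is left open. Your first measurement is also the wrong one: splitting $\mathcal H_A$ into $W$ and $W^\perp$ leaves $v$ in play whenever $\psi^A_v\in W$. Your fallback of projecting onto the span $X$ of $\{\psi^A_j : j\in N_G[v]\}$ cannot work either. Take the star with centre $c$ and leaves $v,x,y$, realised by $\ket{0},\ket{0}+\ket{1},\ket{1}+\ket{2},\ket{1}-\ket{2}$. The projection onto $X=\mathrm{span}\{\ket{0},\ket{1}\}$ sends both $\psi^A_x$ and $\psi^A_y$ to $\ket{1}$, which destroys the orthogonality of two non-adjacent vertices. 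The missing idea is to let Alice first measure $\{P_v,I-P_v\}$, where $P_v=\kb{\hat\psi_v}{\hat\psi_v}$ and $\hat\psi_v=\psi^A_v/\|\psi^A_v\|$. The outcome $P_v$ leaves only indices with $\bk{\psi^A_j}{\psi^A_v}\neq 0$, a subset of the clique $N_G[v]$, so Bob finishes. The outcome $I-P_v$ annihilates $\psi^A_v$ and is $G$-preserving in the sense of Definition~\ref{G-preserving}. Indeed, for non-adjacent $i,j\neq v$,
\[
\bra{\psi^A_i}(I-P_v)\ket{\psi^A_j}=\bk{\psi^A_i}{\psi^A_j}-\bk{\psi^A_i}{\hat\psi_v}\bk{\hat\psi_v}{\psi^A_j}=0 .
\]
Here $\bk{\psi^A_i}{\psi^A_j}=0$, and the product term could be nonzero only if $i,j\in N_G(v)$, which is impossible because $N_G(v)$ is a clique. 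The surviving states have Alice graph $\le G[U]$ and Bob graph $G_B[U]\le\overline{G[U]}$ for some $U\subseteq V\setminus\{v\}$. Since $G[U]$ is chordal, induction on $n$ closes the argument. The two Alice measurements compose into a single Alice measurement, so the protocol stays one-way. In Gram-matrix terms, this step peels the rank-one, clique-supported term $Me_ve_v^{*}M/M_{vv}$ off $M$ and passes to a Schur complement whose pattern stays inside $G-v$. That is the chordal positive-semidefinite decomposition behind the matrix-completion connection the paper attributes to \cite{KMNP2024}.

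\textbf{Necessity.} Your construction fails for $m\ge 5$, because $C_m$ then has no orthogonal representation in $\C^2$. Every nonzero vector in $\C^2$ has a one-dimensional orthocomplement. So $\psi_1\perp\psi_3$ and $\psi_1\perp\psi_4$ force $\psi_3\parallel\psi_4$, contradicting $\psi_2\perp\psi_4$ and $\psi_2\not\perp\psi_3$. Your overlap argument therefore covers only $C_4$ (for example $\ket{0},\ket{+},\ket{1},\ket{-}$). Appealing to Theorem~\ref{singlequbit} with $\mathrm{cc}(C_m)>2$ is vacuous, since that theorem presupposes $\mathcal H_A=\C^2$. What works is a dimension count. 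Take a general-position representation of $C_m$ in $W=\C^{m-2}$, as in the proof of Proposition~\ref{unextendible}. Give Bob any representation of $\overline{C_m}$ on the cycle. If $P_W$ is the projection onto $W$, any $a$ with $P_Wa\neq 0$ is orthogonal to at most $m-3$ cycle states. Its support therefore contains at least three cycle vertices. Because $C_m$ is triangle-free, that support contains a pair that is non-orthogonal for Bob. Instead of a ``generic'' extension, pad the other $n-m$ vertices with states orthogonal to everything on both sides, exactly as in the proof of Lemma~\ref{inducedsubgraphs}. Since the cycle is induced, this gives $G_A\le G\le\overline{G_B}$. Finally, $\sum_k\kb{a_k}{a_k}=I_A$ forces some outcome to have $P_Wa_k\neq 0$, and that outcome fails.
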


This result has a number of consequences, such as the previous theorem which can be derived from it.   
However, we note that the result has some embedded subtleties, associated jointly with the graphs and the state representations, that warrant further investigation even just in the one-way case. For instance, 
the following example shows that one-way LOCC distinguishability can still hold even when the associated graph is not chordal.

\begin{example}
{\rm 
	Consider the (unnormalized) set of product states $\{\ket{\psi_i}\}_{i=1}^4 \subset \mathbb{C}^4 \otimes \mathbb{C}^2$ defined by
	\[
	\begin{aligned}
	\ket{\psi_1} &= (\ket{0} + \ket{1}) \otimes \ket{0}, &
	\ket{\psi_2} &= (\ket{1} + \ket{2}) \otimes \ket{1}, \\
	\ket{\psi_3} &= (\ket{0} + \ket{3}) \otimes \ket{1}, &
	\ket{\psi_4} &= (\ket{2} + \ket{3}) \otimes \ket{0}.
	\end{aligned}
	\]
	Here $G_A$ is the four-cycle $C_4$, and $G_B$ is its complement with edge set $\{(1,4),(2,3)\}$. So $G_A = \overline{G_B}$, and this graph is not chordal. Nevertheless, the states are distinguishable by one-way LOCC with Alice going first; indeed, Alice can measure in the computational basis $\{\ket{0},\ket{1},\ket{2},\ket{3}\}$, which reduces the problem to a pair of states, and Bob then completes the discrimination by measuring in the basis $\{\ket{0},\ket{1}\}$.
    }
\end{example}

We also have access to a number of results that tell us states are one-way indistinguishable when certain graph constraints are satisfied. As an example, we note the following result. 

Let $G$ be a graph with $n = |G|$. We write $S_G$ for the set of $n \times n$ matrices whose zero–nonzero pattern is constrained by the adjacency structure of $G$. The positive semidefinite minimum rank of $G$ was defined in \cite{KMNP2024} by
\[
\eta_{+}(G) := \min \{ \mathrm{rk}(M) : M \in S_G, M \textnormal{ is positive semidefinite with invertible diagonal}\}.
\]
Recall that a {\it simplicial vertex} in a graph is a vertex whose neighbours form a clique.

\begin{theorem}\cite{KMNP2024}\label{thm:min rank impossible one way}
	Let $S$ be a collection of orthonormal product states in 
	$\mathcal{H}_A \otimes \mathcal{H}_B$, with associated graphs 
	$G_A$ and $G_B$, and suppose that $G = \overline{G_B}$ has no simplicial vertices. If $\dim \mathcal{H}_A = \eta_{+}(G)$, then the product states of $S$ 
	cannot be distinguished via one-way LOCC with Alice going first.
\end{theorem}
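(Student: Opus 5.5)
We argue by contradiction. Suppose a one-way protocol with Alice going first distinguishes the states of $S$. As recalled in Section~2, we may take it to have the form $\mathbb M=\{A_k\otimes B_{k,j}\}$ with every $A_k=\kb{a_k}{a_k}$ rank one and $\sum_k A_k=I_A$.

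For a fixed outcome $k$, Bob must perfectly distinguish the post-measurement states $\{\ket{\psi^B_i} : \bra{a_k}\psi^A_i\rangle\neq 0\}$. These are pure states, so they must be mutually orthogonal. Put differently, if $i\neq l$ both survive outcome $k$, then $\bk{\psi^B_i}{\psi^B_l}=0$, i.e.\ $i\sim l$ in $G=\overline{G_B}$. Thus the support $T_k=\{i:\bk{a_k}{\psi^A_i}\neq 0\}$ is a clique of $G$.

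The key step uses the completeness relation $\sum_k \kb{a_k}{a_k}=I_A$. Fix a vertex $v$ and a vector $\ket{w}\in\mathcal H_A$. Then
\[
\bk{w}{\psi^A_v}=\sum_k \bk{w}{a_k}\bk{a_k}{\psi^A_v},
\]
and only the $k$ with $v\in T_k$ contribute. Each such $T_k$ is a clique containing $v$, so $T_k\subseteq N_G[v]$. Hence $\ket{\psi^A_v}$ lies in the span of the vectors $\ket{a_k}$ with $v\in T_k$.

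Since $v$ is not simplicial, $N_G[v]$ is not a clique. So no single $T_k$ covers all of $N_G[v]$, and at least two different cliques through $v$ are needed. I would then aim to show that the Gram matrix of Alice's states, suitably modified, produces a positive semidefinite matrix in $S_G$ with invertible diagonal and rank strictly less than $\dim\mathcal H_A=\eta_+(G)$, contradicting minimality.

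The candidate matrix comes from $M=[\bk{\psi^A_i}{\psi^A_j}]$. Note $M\in S_{G_A}$, and since $G_A\le G$, the pattern of $M$ is compatible with $G$ as long as the zero/nonzero convention of $S_G$ allows zeros on edges of $G$. Its rank is at most $\dim\mathcal H_A$. Because $\dim\mathcal H_A=\eta_+(G)$ is minimal, $M$ must have rank exactly $d_A$, so Alice's vectors span $\mathcal H_A$.

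To get the strict drop I would proceed as follows. Choose a measurement outcome $k$ and form
\[
M'=\Bigl[\bra{\psi^A_i}\bigl(I-\kb{a_k}{a_k}\bigr)\ket{\psi^A_j}\Bigr].
\]
Subtracting the rank-one term $\kb{a_k}{a_k}$ only alters entries indexed by $T_k\times T_k$, and $T_k$ is a clique, so $M'$ still lies in $S_G$. It is positive semidefinite, and its rank is at most $d_A-1$. The remaining issue is invertibility of the diagonal, which requires $\|(I-\kb{a_k}{a_k})\psi^A_v\|\neq 0$ for every $v$, i.e.\ no $\ket{\psi^A_v}$ is parallel to $\ket{a_k}$.

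This is exactly where the absence of simplicial vertices enters. Suppose $\ket{\psi^A_v}\propto\ket{a_k}$. Then every neighbour $u$ of $v$ in $G_A$ satisfies $\bk{a_k}{\psi^A_u}\neq0$, so $N_{G_A}[v]\subseteq T_k$. I expect this to be the main obstacle: the argument so far only controls $N_{G_A}[v]$, while simpliciality refers to $N_G[v]$, and these differ when $G_A\lneq G$. I would first try to choose $k$ so that $\ket{a_k}$ is parallel to none of the (finitely many) $\ket{\psi^A_v}$. If every outcome vector were parallel to some state, the resolution of identity would consist of Alice's own vectors. I would then try to leverage the fact that Alice's vectors span $\mathcal H_A$, together with non-simpliciality in $G$, to derive a contradiction. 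If that fails, I would fall back to arguing in the case $G_A=G$, where $N_{G_A}[v]=N_G[v]$ and non-simpliciality applies directly.
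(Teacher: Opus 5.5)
There is a genuine gap, and it is the one you flagged: the step ``no $\ket{\psi^A_v}$ is parallel to $\ket{a_k}$'' does not follow from the hypotheses as stated. It cannot follow, because when $G_A$ is an arbitrary spanning subgraph of $G=\overline{G_B}$ the statement is false. Take the standard product basis of $\C^2\ot\C^2$, labelled $\psi_1=\ket{0}\ot\ket{0}$, $\psi_2=\ket{0}\ot\ket{1}$, $\psi_3=\ket{1}\ot\ket{0}$, $\psi_4=\ket{1}\ot\ket{1}$. Then $G_B$ has edges $\{1,3\},\{2,4\}$, so $G=\overline{G_B}$ is the cycle $1\sim2\sim3\sim4\sim1$, which has no simplicial vertex. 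Also $\eta_+(G)=2=\dim\H_A$: rank one is impossible since $G$ has non-edges, and the Gram matrix of $\ket{0},\ket{0}+\ket{1},\ket{1},\ket{0}-\ket{1}$ attains rank two. Yet Alice measuring in $\{\ket{0},\ket{1}\}$ and then Bob measuring in $\{\ket{0},\ket{1}\}$ distinguishes the states. Here every outcome vector of Alice is one of her own states, and $N_{G_A}[1]=\{1,2\}$ is a clique of $G$ although $N_G[1]=\{1,2,4\}$ is not. This is exactly the configuration you could not rule out, so neither choosing $k$ cleverly nor using that Alice's vectors span $\H_A$ can rescue the general case. The paper only quotes this result, but a correct version needs an extra hypothesis such as $G_A=\overline{G_B}$, or more generally that no $N_{G_A}[v]$ is contained in a clique of $G$. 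Your ``fallback'' is the actual theorem, not a special case of it.

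There is also an error in the rank step. For a rank-one POVM you only know $\bk{a_k}{a_k}\le 1$. If $\bk{a_k}{a_k}<1$, then $I-\kb{a_k}{a_k}$ is invertible, so $M'$ has the same rank as $M$ and nothing is gained. Use instead the orthogonal projection $P_k=I-\kb{a_k}{a_k}/\bk{a_k}{a_k}$ onto $\ket{a_k}^\perp$. The matrix $\bigl[\bra{\psi^A_i}P_k\ket{\psi^A_j}\bigr]$ is the Gram matrix of vectors lying in a $(d_A-1)$-dimensional space. It still differs from $M$ only on $T_k\times T_k$, so it vanishes on all non-edges of $G$, which is what membership in $S_G$ requires under the convention you flagged. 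Its diagonal is invertible if and only if no $\ket{\psi^A_v}$ is parallel to $\ket{a_k}$. Under $G_A=\overline{G_B}$ this holds for \emph{every} nonzero $\ket{a_k}$, since $\ket{a_k}\propto\ket{\psi^A_v}$ would force $N_G[v]=N_{G_A}[v]\subseteq T_k$, a clique. With these two changes your argument is complete. The resolution of the identity, the span observation and the full-rank property of $M$ are then unnecessary, beyond guaranteeing one nonzero $\ket{a_k}$.
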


The following is an illustrative example of this phenomenon. 

\begin{example}\label{ex:not one way but three way}
{\rm 
	Consider the following set of (unnormalized) states on $\mathbb C^3 \otimes \mathbb C^3$:

    \begin{equation*}
\begin{aligned}
	\psi_1 = \ket{0} \otimes & \ket{0+1},\quad \psi_2 = \ket{0} \otimes \ket{0-1}, \quad \psi_3= \ket{2} \otimes \ket{1+2},\quad \psi_4 = \ket{2} \otimes \ket{1-2}, \\
	& \psi_5 = \ket{1-2} \otimes \ket{0}, \quad \psi_6 = \ket{0+1} \otimes \ket{2},\quad
	\psi_7 = \ket{0-1} \otimes \ket{2} .
\end{aligned}
\end{equation*}

	\begin{figure}[h!]\label{fig:alice-bob-localstate-graphs}
		\begin{center}
			\begin{tikzpicture}[scale=1.15, baseline=(current bounding box.north)]
			
			
			\draw[fill] (-2,1)   node[anchor=south] {$1$}     circle (2pt);
			\draw[fill] (-2,0)   node[anchor=north] {$2$}     circle (2pt);
			
			\draw[fill] (-0.7,1) node[anchor=south] {$6$}   circle (2pt);
			\draw[fill] (-0.7,0) node[anchor=north] {$7$}   circle (2pt);
			
			\draw[fill] (0.7,0)  node[anchor=north] {$5$}   circle (2pt);
			
			\draw[fill] (2,0)    node[anchor=north] {$3$}     circle (2pt);
			\draw[fill] (2,1)    node[anchor=south] {$4$}   circle (2pt);
			
			\draw[thick] (-2,1)--(-2,0);
			\draw[thick] (-2,1)--(-0.7,1);
			\draw[thick] (-2,1)--(-0.7,0);
			
			\draw[thick] (-2,0)--(-0.7,0);
			\draw[thick] (-2,0)--(-0.7,1);
			
			\draw[thick] (-0.7,1)--(0.7,0);
			
			\draw[thick] (-0.7,0)--(0.7,0);
			
			\draw[thick] (0.7,0)--(2,1);
			\draw[thick] (0.7,0)--(2,0);
			\draw[thick] (2,1)--(2,0);
			
			\node at (0,1.8) {Alice graph $G_A$};
			
			\begin{scope}[xshift=7.2cm]
			
			\draw[fill] (-2,1)   node[anchor=south] {$6$}     circle (2pt);
			\draw[fill] (-2,0)   node[anchor=north] {$7$}     circle (2pt);
			
			\draw[fill] (-0.7,1) node[anchor=south] {$3$}   circle (2pt);
			\draw[fill] (-0.7,0) node[anchor=north] {$4$}   circle (2pt);
			
			\draw[fill] (0.7,1)  node[anchor=south] {$1$}   circle (2pt);
			\draw[fill] (0.7,0)  node[anchor=north] {$2$}   circle (2pt);
			
			\draw[fill] (2,0.5)  node[anchor=west] {$5$}      circle (2pt);

			\draw[thick] (-2,1)--(-2,0);
			\draw[thick] (-2,1)--(-0.7,1);
			\draw[thick] (-2,0)--(-0.7,0);
			\draw[thick] (-2,1)--(-0.7,0);
			\draw[thick] (-2,0)--(-0.7,1);
			
			\draw[thick] (-0.7,1)--(0.7,1);
			\draw[thick] (-0.7,0)--(0.7,0);
			
			\draw[thick] (2,0.5)--(0.7,0);
			\draw[thick] (2,0.5)--(0.7,1);

			\node at (0,1.8) {Bob graph $G_B$};
			
			\end{scope}
			\end{tikzpicture}
		\end{center}
			\caption{Graph representation of local orthogonality relations in Example~\ref{fig:alice-bob-localstate-graphs}.}
	\end{figure}
	Observe that $G = \overline{G_B}$ has no simplicial vertices. Here we have  $\dim\mathcal H_A= \dim\mathcal H_B=3$, and one can show (by considering basic graph parameter inequalities) that $\eta_+(G)=3$. Hence, Theorem~\ref{thm:min rank impossible one way} implies that the states are not distinguishable via one-way LOCC with Alice going first. Applying the same reasoning to $\overline{G_A}$ shows that one-way LOCC distinguishability with Bob going first is also impossible. We will revisit this example in the next section, showing how the states are in fact distinguishable with full LOCC. 
    }
\end{example}

We point the reader to \cite{KMNP2024} for various other examples of the fact that there are many graphs with representations that can always be distinguished with two-way LOCC but not always with one-way.

\subsection{Full LOCC Distinguishability and Graph Theory}

We next extend the graph-theoretic framework adapted to full LOCC. To this end, let us conceptually illustrate how an LOCC protocol can be described via a tree representation that is often used to describe LOCC protocols, making use of the example above.

\begin{example} 
{\rm 
Returning to Example~\ref{ex:not one way but three way}, recall the states are not distinguishable via one-way LOCC, regardless of which party goes first. They can, however, be perfectly distinguished using a multi-round LOCC protocol. The full LOCC protocol is depicted in Figure~\ref{fig:locc-tree-example} as a measurement tree, following the construction in \cite{bennett1999quantum}.

\begin{figure}[h!]
	\centering
	\begin{tikzpicture}[scale=1.0, line cap=round, line join=round]
	
	\tikzset{
		dot/.style={circle, fill=black, inner sep=1.6pt},
		lab/.style={font=\large},
		leaf/.style={font=\large\bfseries},
		lvl/.style={dashed, gray},
	}
	
	\def\yone{0}
	\def\ytwo{-2}
	\def\ythree{-4}
	\def\yfour{-6}
	
	\draw[lvl] (-6.5,\yone) -- (6,\yone);
	\draw[lvl] (-6.5,\ytwo) -- (6,\ytwo);
	\draw[lvl] (-6.5,\ythree) -- (6,\ythree);
	\draw[lvl] (-6.5,\yfour) -- (6,\yfour);
	
	\node[lab] at (-7, -1.0) {\itshape 1};
	\node[lab] at (-7, -3.1) {\itshape 2};
	\node[lab] at (-7, -5.3) {\itshape 3};
	
	\node[dot] (r)   at (0,\yone) {};
	
	\node[dot] (a01) at (-3.2,\ytwo) {};
	\node[dot] (a2)  at ( 3.2,\ytwo) {};
	
	\node[lab] at (-2,-0.9) {\normalsize A0/1};
	\node[lab] at ( 2,-0.9) {\normalsize A2};
	
	\draw (r) -- (a01);
	\draw (r) -- (a2);
	
	\node[dot] (b02) at (-5,\ythree) {};
	\node[dot] (b01) at (-2.6,\ythree) {};
	\node[dot] (b2)  at (-0.2,\ythree) {};
	
	\node[dot] (L5a) at ( 1.2,\ythree) {};
	\node[dot] (L3)  at ( 3.4,\ythree) {};
	\node[dot] (L4)  at ( 4.8,\ythree) {};
	
	\node[lab] at (1.4,-3.3) {\normalsize B0};
	\node[lab] at (2.7,-3.3) {\normalsize B(1$+$2)};
	\node[lab] at (5,-3.3) {\normalsize B(1$-$2)};
	
	\node[lab] at (-5.2,-3.3) {\normalsize B(0$+$1)};
	\node[lab] at (-3.43,-3.3) {\normalsize B(0$-$1)};
	\node[lab] at (-0.7,-3.3) {\normalsize B2};
	
	\draw (a2) -- (L5a);
	\draw (a2) -- (L3);
	\draw (a2) -- (L4);
	
	\draw (a01) -- (b01);
	\draw (a01) -- (b02);
	\draw (a01) -- (b2);

	\node[dot] (a0p) at (-5.7,\yfour) {};
	\node[dot] (a1p) at (-4.3,\yfour) {};
	
	\node[dot] (a0) at (-3.2,\yfour) {};
	\node[dot] (a1) at (-1.9,\yfour) {};

	\node[dot] (L6)  at (0,\yfour) {};
	\node[dot] (L7)  at (1.3,\yfour) {};
	
	\node[lab] at (-5.8,-5.3) {\normalsize A0};
	\node[lab] at (-4.2,-5.3) {\normalsize A1};
	\node[lab] at (-3.3,-5.3) {\normalsize A0};
	\node[lab] at (-1.8,-5.3) {\normalsize A1};
	\node[lab] at (-.7,-5.3) {\normalsize A(0$+$1)};
	\node[lab] at (1.5,-5.3) {\normalsize A(0$-$1)};
	
	\draw (b02) -- (a0p);
	\draw (b02) -- (a1p);
	
	\draw (b01) -- (a0);
	\draw (b01) -- (a1);
		
	\draw (b2) -- (L6);
	\draw (b2) -- (L7);

%
%
%
%

	\node[leaf] at ( 1.2,\ythree-0.55) {5};
	\node[leaf] at ( 3.4,\ythree-0.55) {3};
	\node[leaf] at ( 4.8,\ythree-0.55) {4};
	
	\node[leaf] at (-5.7,\yfour-0.55) {1};
	\node[leaf] at (-4.3,\yfour-0.55) {5};
	\node[leaf] at (-3.2,\yfour-0.55) {2};
	\node[leaf] at (-1.9,\yfour-0.55) {5};
	\node[leaf] at (0,\yfour-0.55) {6};
	\node[leaf] at (1.3,\yfour-0.55) {7};
	

	\end{tikzpicture}
	\caption{Measurement tree for a two–way LOCC protocol in Example~\ref{ex:not one way but three way} with Alice going first.}
	\label{fig:locc-tree-example}
\end{figure}

Regarding the labels in Figure~\ref{fig:locc-tree-example}, the notation A$(\cdot)$ and B$(\cdot)$ indicate measurements by Alice and Bob, respectively. A label such as A$0/1$ means that the $\ket{0}$ and $\ket{1}$ outcomes are not distinguished. Bold-faced numbers at the leaves indicate states inferred from the measurement chain. Branches at each node correspond to POVM outcomes, so outcomes with zero support (e.g., $\ket{2}$ in Alice’s Level~3 measurement in the basis $\{\ket{0},\ket{1},\ket{2}\}$) is omitted. Note that after Alice’s first projective measurement, $\psi_5 = \ket{1-2}\otimes \ket{0}$ is projected onto $\ket{1} \otimes \ket{0}$ in the $\{0,1\}$ branch.

Explicitly, the protocol works as follows: Alice first performs a projective measurement separating the subspaces $\mathrm{span}\{\ket{0},\ket{1}\}$ and $\mathrm{span}\{\ket{2}\}$. Conditioned on the outcome, Bob applies a measurement adapted to the remaining candidate states. In the $\{0,1\}$ branch, further rounds of local measurements are required, whereas the $\{2\}$ branch is done immediately. The tree encodes the complete sequence of conditional measurements leading to discrimination. 
}
\end{example}

In general, given a set of product states $\{ \ket{\psi^A_k}\otimes \ket{\psi^B_k} \}_{k=1}^r$ on $\mathcal H_A \otimes \mathcal H_B$, the set is perfectly distinguishable via a multi-round LOCC protocol if the associated measurement tree satisfies: 
	\begin{itemize}
		\item[(i)] Each leaf corresponds to a singleton set of states.
		\item[(ii)] For every state, there exists a path from the root to a leaf labelled by that state.
	\end{itemize}

The tree description is conceptually useful for visualizing LOCC protocols. Turning now to our investigation, we recall the content of Theorem~\ref{chordalthm}, namely that every set of bipartite product states represented by the graphs $G_A, G_B$ can be distinguished with one-way LOCC if and only if there exists a graph $G$ such that $G_A \le G \le \overline{G_B}$ and $G$ is chordal. This motivates our starting point for full LOCC as follows.

\begin{definition}\label{Bipartite G Definition}
Define ${\mathcal G}$ to be the set of simple finite graphs $G$ such that every set of bipartite product states represented by graphs $G_A, G_B$ with $G_A\le G \le \overline{G_B}$ can be distinguished with LOCC in a finite number of steps, with the convention that Alice always has the first move. We call $\mathcal G$ the set of {\bf distinguishable graphs} for $A$ and $B$. 

We further define $\mathcal{G}_0$ and ${\mathcal G}_{1}$ to be the subsets of ${\mathcal G}$ that can be distinguished with product measurements and one-way LOCC, respectively.  
\end{definition}
 Note that if we have a set of bipartite product states $\mathcal{S}$ represented by graphs $G_A$ and $G_B$, then the number of vertices in $G_A$ and in $G_B$ are equal to $|\mathcal{S}|$. The fact that $|G_A| = |G_B|$ combined with the inequality $G_A \le G \le \overline{G_B}$ implies that $|G| = |G_A|$ and that every vertex in $G$ is also a vertex in $G_A$. In other words, $G_A$ is a spanning subgraph of $G$ and, similarly, $G_B$ is a spanning subgraph of $\overline{G}$.

We begin by showing that the sets $\mathcal{G}_0$ and ${\mathcal G}_{1}$ can be identified with well-known classes of graphs.

\begin{theorem}\label{splitchordalthm}
    The set $\mathcal{G}_0$ is the set of split graphs, and the set $\mathcal{G}_1$ is the set of chordal graphs. 
\end{theorem}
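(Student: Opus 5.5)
The chordal statement is essentially Theorem~\ref{chordalthm}, so most of the work is the split-graph characterization of $\mathcal G_0$. Throughout I read $\mathcal G_0$ as: every set of product states with $G_A\le G\le\overline{G_B}$ can be distinguished by a product measurement with no communication, i.e.\ Alice and Bob each perform a fixed local measurement and the pair of outcomes identifies the state.

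\textbf{The chordal case.} Theorem~\ref{chordalthm} says that $G$ is chordal if and only if every collection with $G_A\le G\le\overline{G_B}$ is one-way distinguishable with Alice going first. This matches Definition~\ref{Bipartite G Definition} with the Alice-first convention, so $\mathcal G_1$ is exactly the set of chordal graphs. The only thing to check is that the quantifiers agree: "every set of product states with $G_A\le G\le \overline{G_B}$" means the same thing in both places.

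\textbf{Split graphs are in $\mathcal G_0$.} Let $V=K\cup I$ with $K$ a clique and $I$ a coclique of $G$. For $u,v\in I$ we have $u\not\sim v$ in $G$, hence $u\not\sim v$ in $G_A$, so Alice's vectors indexed by $I$ are mutually orthogonal. Alice measures with the projections $\kb{\psi^A_v}{\psi^A_v}$ (normalized) for $v\in I$, together with the projection $P$ onto the orthogonal complement of their span. Outcome $v$ occurs with nonzero probability only for the true state $v$ and for states in $K$. Outcome $P$ occurs only for states in $K$, since every state in $I$ has zero weight there.

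For $u,w\in K$ we have $u\sim w$ in $G$, so $u\not\sim w$ in $\overline G$, hence $u\not\sim w$ in $G_B$. Thus Bob's vectors indexed by $K$ are mutually orthogonal. Bob measures in an orthonormal basis containing these vectors, plus a complementary projection. If Bob's outcome is some $u\in K$, the state is $u$. If Bob gets the complementary outcome, the state is not in $K$, so it lies in $I$ and Alice's outcome $v$ identifies it.

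These are two fixed local measurements with no communication, so the protocol is a product measurement.

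\textbf{Non-split graphs are not in $\mathcal G_0$.} This direction is the main obstacle. I would use the Földes--Hammer characterization: $G$ is split if and only if it has no induced $C_4$, $C_5$, or $2K_2$. Since one can restrict to the states indexed by an induced subgraph (a distinguishing product measurement still works there), it suffices to give a bad representation for each of these three forbidden graphs $H$. For each, take $G_A=H$ and $G_B=\overline H$, and choose the dimensions and vectors so that no product measurement works. Finally, pad the rest of $G$ with states living in orthogonal extra dimensions, so that the graphs on the whole vertex set are still sandwiched around $G$.

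The key criterion is this. A product measurement $\{A_a\otimes B_b\}$ distinguishes the states exactly when, for every outcome pair $(a,b)$, the states with $\langle\psi^A|A_a|\psi^A\rangle\neq0$ and $\langle\psi^B|B_b|\psi^B\rangle\neq 0$ form at most a singleton. Taking rank-one elements, each $A_a$ selects a set $S_a$ of states "seen" by Alice's outcome, and these sets must cover every vertex. Non-orthogonality of $\psi_u^A$ and $\psi_v^A$ forces some outcome to see both $u$ and $v$, since the identity decomposition has $\sum_a\langle\psi_u|A_a|\psi_v\rangle\ne0$. The analogous statement holds for Bob. So every edge of $G_A$ lies in some $S_a$, every edge of $G_B$ lies in some $T_b$, and we need $|S_a\cap T_b|\le1$ for all $a,b$.

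For $2K_2$, with $G_A=2K_2$ on edges $12,34$ and $G_B=C_4$ on edges $13,14,23,24$, choose generic qubit vectors. Then every rank-one $A_a$ sees at least one state from each of Alice's non-orthogonal pairs, and generically sees all four states. The same holds for Bob, which forces a pair $S_a,T_b$ with $|S_a\cap T_b|\ge 2$.

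For $C_4$ (self-complementary up to $2K_2$) and $C_5$ (self-complementary), I expect an analogous low-dimensional generic construction, with each local dimension equal to the orthogonal-rank minimum, to force every rank-one effect to see large sets. Checking these cases carefully is where the work lies.
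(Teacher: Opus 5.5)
Your treatment of $\mathcal G_1$ matches the paper, and so does your protocol for split graphs. The converse inclusion (every graph in $\mathcal G_0$ is split) is not established, however. You leave $C_4$ and $C_5$ as expectations, and the $2K_2$ sketch is wrong as written.

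In $\C^2$ the graph $G_A=2K_2$ forces $\psi^A_1\parallel\psi^A_2$ and $\psi^A_3\parallel\psi^A_4$, since $\psi^A_1,\psi^A_2$ both lie in the one-dimensional space $\{\psi^A_3\}^\perp$. So nothing is generic. Taking $\psi^A_1,\psi^A_2\propto\ket{0}$, Alice's effect $\kb{0}{0}$ sees only $\{1,2\}$. More fundamentally, ``generically'' is the wrong quantifier: you must defeat \emph{every} product measurement, and Alice can always project onto the two orthogonal component spans. The case can be rescued from Bob's side. His four qubit vectors for $C_4$ are pairwise non-parallel, so each rank-one Bob effect misses at most one state. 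It therefore sees $\{1,2\}$ or $\{3,4\}$, and each of these pairs lies in some $S_a$.

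The missing idea is that no explicit constructions are needed. A product measurement is simultaneously an Alice-first and a Bob-first one-way protocol. So if $G\in\mathcal G_0$, Theorem~\ref{chordalthm} gives that $G$ is chordal. For $\overline G$, take any collection with $G_A\le\overline G\le\overline{G_B}$ and swap the tensor factors. The swapped collection satisfies $G_B\le G\le\overline{G_A}$, so it is product-distinguishable. Hence the original collection is too, in particular Alice-first one-way, and $\overline G$ is chordal. A graph with $G$ and $\overline G$ both chordal is split. This is equivalent to the F\"oldes--Hammer list you quote, since $\overline{C_4}=2K_2$ and $\overline{C_5}=C_5$. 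This is the paper's argument. Equivalently, within your framework, Theorem~\ref{chordalthm} directly supplies bad representations for the non-chordal $C_4$ and $C_5$, and for $2K_2$ after swapping parties. Proposition~\ref{unextendible} even gives an unextendible product basis for $C_5$.

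Your sufficiency argument also contains a false step: ``if Bob's outcome is some $u\in K$, the state is $u$.'' A state $w\in I$ with $w\not\sim u$ in $G$ may have $w\sim u$ in $G_B\le\overline G$, i.e.\ $\langle\psi^B_w|\psi^B_u\rangle\neq 0$. Such a $w$ also produces Bob's outcome $u$; this already happens for $K=\{u\}$, $I=\{w\}$ with no edges. The protocol is still correct, but you must use the joint outcome. If Alice's outcome is $P$, only $u$ survives. If Alice's outcome is $v\in I$, the only candidates are $v$ and $u$. Both surviving would require $\psi^A_u\not\perp\psi^A_v$ and $\psi^B_u\not\perp\psi^B_v$ simultaneously, contradicting orthogonality of the product states. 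This is exactly the paper's observation that $N_G[y]\cap N_{\overline G}[x]$ is a single vertex.
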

\begin{proof}
The set ${\mathcal G}_1$ has been proved to be the set of chordal graphs as noted above.

Regarding the set $\mathcal G_0$, note that 0-direction LOCC implies that states can be distinguished with product measurements; i.e., with no communication between Alice and Bob during the measurement. In particular, this implies that one-way LOCC is possible in either direction, which means that both $G$ and $\overline{G}$ must be chordal. One of the equivalent descriptions of a split graph is that it is chordal and its complement is also chordal. Thus, ${\mathcal G}_0$ is contained in the set of split graphs. 

On the other hand, a split graph $G$ has the property that its vertices can be partitioned into sets $V_1$ and $V_2$, where the vertices of $V_1$ form a clique and the vertices of $V_2$ form an independent set. (Any configuration of edges connecting the two sets is allowed.) By definition, every vertex in $V_2$ is a simplicial vertex in $G$ and every vertex in $V_1$ is a simplicial vertex in $\overline{G}$. There will always be exactly one vertex in their intersection, identifying their state: If $x \in V_1$ and $y \in V_2$,  $N_G[y] \cap N_{\overline{G}}[x]$ consists of either $x$ or $y$, depending on which graph contains the edge $(x,y)$. 
This shows that $\mathcal{G}_0$ contains the set of split graphs and hence the two sets coincide.  
\end{proof}

Before continuing deeper into the graph structures that belong to the set of distinguishable graphs, we show cycles with five or more vertices do not belong to the set. 
Recall that an unextendible product basis is a set of pairwise orthogonal product vectors that does not span all of $\bigotimes_{i = 1}^m \mathcal H_i$ and whose orthogonal complement contains no product vector. 
It is well-known that an unextendible product basis cannot be distinguished even with separable measurements. This means that any graph that admits an unextendible product basis is not in $\mathcal{G}$. 

We note that \cite[Lemma 1]{bennett1999unextendible} gives a condition on the dimension of the local spans that determines whether a basis is extendible as follows.

\begin{lemma} \cite{bennett1999unextendible} Let $S=\{\ket{\psi_j}=\bigotimes_{i=1}^m\ket{\phi_{i,j}}\}_{j=1}^n$ be a set of pairwise orthogonal product states in $\bigotimes_{i = 1}^m \mathcal H_i$.  Then $S$ is extendible if and only if there exists a disjoint partition of $S$ into $S_1,S_2,...,S_m$ such that $\dim(\mathrm{span} (\{\ket{\phi_{i,j}}\}_{j:\ket{\psi_j}\in S_i})) < \dim(\mathcal H_i) $ for all  $1\le i\le m$.
\end{lemma}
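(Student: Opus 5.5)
The plan is to prove both directions directly from the definition. Under that definition, $S$ is \emph{extendible} precisely when some nonzero product vector $\ket{\phi_1}\ot\cdots\ot\ket{\phi_m}$ is orthogonal to every element of $S$. Such a vector automatically shows that $S$ does not span $\bigotimes_i \H_i$, so the ``does not span'' clause in the definition of an unextendible product basis needs no separate treatment. The whole argument rests on one identity: for product vectors the inner product factorizes,
\[
\Big\langle \textstyle\bigotimes_{i}\phi_i ,\ \psi_j \Big\rangle \;=\; \prod_{i=1}^m \langle \phi_i , \phi_{i,j}\rangle ,
\]
so a product vector is orthogonal to $\ket{\psi_j}$ if and only if at least one factor vanishes.

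For the ``if'' direction, suppose the partition $S_1,\dots,S_m$ exists. I allow some $S_i$ to be empty, with the empty span having dimension $0$. For each $i$, the subspace $W_i=\mathrm{span}\{\ket{\phi_{i,j}} : \ket{\psi_j}\in S_i\}$ is proper in $\H_i$, so I choose a nonzero $\ket{\phi_i}\in W_i^{\perp}$. Now take any $\ket{\psi_j}\in S$. Since the $S_i$ partition $S$, it lies in some $S_i$, and then $\langle\phi_i,\phi_{i,j}\rangle=0$. By the factorization identity, $\bigotimes_i\ket{\phi_i}$ is orthogonal to $\ket{\psi_j}$. It is also nonzero, so it is a product vector in $S^\perp$ and $S$ is extendible.

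For the ``only if'' direction, suppose a nonzero product vector $\bigotimes_i\ket{\phi_i}$ is orthogonal to all of $S$; each $\ket{\phi_i}$ is then nonzero. For each $j$, the product $\prod_i\langle\phi_i,\phi_{i,j}\rangle$ vanishes, so the set $I_j=\{i : \langle\phi_i,\phi_{i,j}\rangle=0\}$ is nonempty. I assign $\ket{\psi_j}$ to $S_{\min I_j}$, which gives a disjoint partition of $S$. By construction, every local vector $\ket{\phi_{i,j}}$ with $\ket{\psi_j}\in S_i$ lies in $\phi_i^{\perp}$. That subspace has dimension $\dim\H_i-1$, so $\dim W_i<\dim\H_i$ for every $i$.

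I do not expect a genuine obstacle, since the proof is essentially bookkeeping. The only points needing care are the conventions, and I will state them explicitly:
\begin{itemize}
\item an empty $S_i$ is permitted and counts as having a span of dimension $0$;
\item the partition in the ``only if'' direction must be made disjoint by a tie-breaking rule, here the choice of the minimal index in $I_j$.
\end{itemize}
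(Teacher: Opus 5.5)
Your proof is correct. Both directions follow from the factorization $\langle \bigotimes_i \phi_i, \bigotimes_i \phi_{i,j}\rangle = \prod_i \langle \phi_i, \phi_{i,j}\rangle$, and your conventions (empty blocks allowed, ties broken by the minimal index) are exactly what is needed. The paper does not prove this lemma itself: it quotes it from Bennett et al., and your argument is essentially the standard proof given in that source.
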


In the bipartite case, we need to be able to partition our states into $S = S_1 \cup S_2$ such that the dimension of Alice's states in $S_1$ is strictly less than Alice's dimension and the dimension of Bob's states in $S_2$ is strictly less than Bob's dimension in order for these states to be extendible.  In particular, a simple counting argument shows that a set of $n$ product states in $\mathbb C^{d_1}\ot \mathbb C^{d_2}$, with $n\geq \max\{d_1,d_2\}$, is unextendible if every set of $d_1$ Alice states is linearly independent, every set of $d_2$ Bob states is linearly independent, and $d_1 + d_2  \le n +1$. 

Relatedly, we note that Alon and Lovasz \cite{alon2001unextendible} looked at the connectivity of graphs in terms of guaranteeing a representation in general position (i.e., any subset of the represented states with cardinality equal to the dimension of the Hilbert space is linearly independent), though their definition of a representation is slightly different. 

The unextendible product basis viewpoint and the dimension related observations above allow us to prove the following. 

\begin{prop}\label{unextendible}
For every $n \ge 5$, there exists an unextendible product basis whose graphs are $G_A = \overline{G}_B = C_n$. Thus, for any $n \ge 5$, $C_n \notin \mathcal{G}$.
\end{prop}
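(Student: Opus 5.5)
The plan is to build, for each $n\ge 5$, an explicit orthogonal product set in $\C^{n-2}\ot\C^{3}$ with $G_A=C_n$ and $G_B=\overline{C_n}$. We then verify that it is unextendible using the counting criterion stated after the Lemma of \cite{bennett1999unextendible}. Here $d_1=n-2$ and $d_2=3$, so $d_1+d_2=n+1$. It therefore suffices to show two things: every $n-2$ of Alice's vectors are linearly independent, and every $3$ of Bob's vectors are linearly independent. Concretely, in any partition $S=S_1\cup S_2$, Bob's span on $S_2$ is less than $3$ only if $|S_2|\le 2$. Then $|S_1|\ge n-2$, so Alice's span on $S_1$ is all of $\C^{n-2}$, and no partition satisfies the Lemma. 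Orthogonality of the product states is automatic once $G_A=\overline{G_B}$. Finally, an unextendible product basis is not distinguishable even by separable measurements, which gives $C_n\notin\mathcal G$.

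\emph{Bob's vectors (representation of $\overline{C_n}$ in $\C^3$).} Here $b_i\perp b_j$ must hold exactly when $j-i\equiv\pm1 \pmod n$. I will use a "cone" construction generalizing the pentagon UPB. Set $b_k=(\cos k\theta,\sin k\theta,h)$ with $\theta=2\pi m/n$, where $\gcd(m,n)=1$, $\cos\theta<0$ and $h^2=-\cos\theta$. Then $\langle b_i,b_j\rangle=\cos((j-i)\theta)-\cos\theta$. This vanishes iff $(j-i)\theta\equiv\pm\theta$, i.e.\ iff $j-i\equiv\pm1$ modulo $n$, using coprimality. The conditions $m\not\equiv -m$ and $n\ge5$ rule out spurious coincidences. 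We still need to choose $m$ coprime to $n$ with $n/4<m<3n/4$; one checks such $m$ exists for all $n\ge5$, e.g.\ $m=(n\pm1)/2$ for odd $n$, and $m=n/2\pm1$ or $n/2\pm2$ for even $n$. Linear independence of any three $b_k$ is then immediate. The $b_k$ are distinct points of a circle in the affine plane $z=h\neq0$, so any three are affinely independent, hence linearly independent.

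\emph{Alice's vectors (representation of $C_n$ in $\C^{n-2}$).} I will produce a positive semidefinite Gram matrix $M$ of rank $n-2$ and then factor it as $M=X^*X$, taking Alice's vectors to be the columns $a_i$ of $X$. The matrix $M$ must have nonzero diagonal, nonzero entries on the cycle edges, and zeros elsewhere. For odd $n$ take $M=aI+A(C_n)$, where $A(C_n)$ is the adjacency matrix of $C_n$. Its eigenvalues are $a+2\cos(2\pi k/n)$, and the minimum, $2\cos(\pi(n-1)/n)$, is attained twice. Choosing $a=-2\cos(\pi(n-1)/n)$ gives a PSD matrix with a $2$-dimensional kernel. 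For even $n$ the minimum of the ordinary cycle spectrum is simple, so I will instead use a signed cycle with one edge weight $-1$. Its eigenvalues are $2\cos((2k+1)\pi/n)$, whose minimum is again doubly attained. In both cases the kernel is spanned by real vectors of the form $x_j=\cos(j\varphi+\alpha)$, $\alpha$ arbitrary, for an appropriate $\varphi$, with a sign twist in the even case.

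The main obstacle is the general-position condition for Alice. A dependence among $\{a_i\}_{i\in T}$ with $|T|=n-2$ is exactly a nonzero kernel vector of $M$ supported on $T$, i.e.\ one vanishing at the two excluded indices $j\neq l$. With the explicit kernel above, this reduces to requiring $\sin((l-j)\varphi)\neq0$ for all $0<l-j<n$. I expect this to hold because $\varphi$ is of the form $\pi(n-1)/n$ (or its signed analogue), whose small multiples avoid multiples of $\pi$; this trigonometric check is what I would verify carefully. If it failed for some $n$, I would fall back on a perturbation argument. The set of PSD matrices with the given pattern and rank $n-2$ is nonempty, and I would perturb within it generically to destroy any finitely many such coincidences.
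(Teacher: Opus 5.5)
Your overall route is the paper's. Alice gets a general-position representation of $C_n$ in $\C^{n-2}$, Bob gets a cone representation of $\overline{C_n}$ in $\C^3$, and the counting consequence of the Bennett et al.\ lemma with $d_1+d_2=n+1$ finishes the proof.

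Your Alice side is actually more complete than the paper's, which only asserts that such a representation exists. The trigonometric check you defer does go through. The determinant $\sin\bigl((l-j)\pi(n-1)/n\bigr)$ vanishes only if $n\mid (l-j)(n-1)$. That is impossible for $0<l-j<n$, since $\gcd(n,n-1)=1$. So the perturbation fallback is not needed.

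The gap is on Bob's side, at $n=6$. Your claim that some $m$ coprime to $n$ satisfies $n/4<m<3n/4$ for every $n\ge 5$ fails there.

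\begin{itemize}
\item The integers in $(3/2,9/2)$ are $2,3,4$, and none is coprime to $6$.
\item Your recipe gives $n/2\pm1\in\{2,4\}$, which are not coprime to $6$.
\item It also gives $n/2\pm2\in\{1,5\}$, where $\cos\theta=1/2>0$, so no real $h$ exists.
\end{itemize}

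Put differently, both primitive sixth roots of unity have real part $1/2$. Dropping coprimality really does break the construction. For $m=2$ you get $b_4=b_1\perp b_2$, yet $2$ and $4$ are adjacent in $\overline{C_6}$ and must be non-orthogonal.

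The paper flags exactly this case and handles it separately with the explicit real vectors
\[
(1,0,0),\ (0,1,0),\ (5,0,3),\ (3,4,-5),\ (3,4,5),\ (0,5,-4).
\]
Cyclically consecutive vectors are orthogonal, all other pairs are not, and all twenty $3\times 3$ determinants are nonzero. Adding this case, or any other general-position representation of $\overline{C_6}$ in $\C^3$, completes your argument. For every other $n\ge 5$, your explicit choice of $m$ is valid and is a more direct substitute for the paper's appeal to Gauss's result on sums of primitive roots of unity.
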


\begin{proof}
Every $n$-cycle $C_n$ has a general-position representation in dimension $\dim \mathcal H = n-2$. For example, for $n=5$ we can take on $\mathcal H = \mathbb C^3$ the five states 
\[
S= \{ \ket{0}, \ket{0} + \ket{1}, \ket{1}, \ket{1} + \ket{2}, \ket{0} + \ket{2} \}, 
\]
and observe that any two of these states are orthogonal if and only if they are neither adjacent in the list nor are the first and last elements in the list.
Further, every complement $\overline{C_n}$ has a general position representation in dimension $3$ (as long as $n \ge 5$).  We first note that if $n=5$ or $n\ge 7$, there exists a primitive $n$th root of unity $\eta$ with negative real part. When $n=5$, we can take the primitive fifth root of unity $\eta=e^{\frac{4\pi i}{5} }$. It was shown by Gauss that the sum of the primitive $n$th roots of unity is always either $1$, $0$ or $-1$ \cite[Article 81]{gauss1870disquisitiones}.    The sum of the two primitive roots of $n$th roots of unity $e^{\frac{2\pi i}{n}}+ e^{\frac{-2\pi i}{n}}$ is $2\cos( \frac{2\pi}{n})$ which is always a real number strictly greater than one when $n\geq 7$. Hence it follows from Gauss' result that there must be a primitive root of unity with negative real part when $n\geq 7$.

Indeed, 
when either $n=5$ or $n\ge 7$, we have the standard representation of $\overline{C_n}$ in $\mathbb{R}^3$ given by
\bee
\ket{\psi_k} = (\Re(\eta^k), \Im(\eta^k),h) \quad \mathrm{for}\, 1 \leq k \leq n,
\eee
where $\eta$ is a primitive $n$th root of unity such that $\Re(\eta) < 0$ and $h = \sqrt{-\Re(\eta)}$.  It is easily checked that $\bk{\psi_j}{\psi_k} = 0$ if and only if $j-k = \pm 1$.  Unfortunately, there is no primitive sixth root of unity $\eta$ with $\Re(\eta) < 0$ because both of the primitive sixth roots of unity have real parts equal to $1/2$. Therefore, we need a separate  construction for the $n=6$ case.  Consider the following  set of (unnormalized) states:

\[
S= \{ (1,0,0), (0,1,0), (5,0,3),  (3,4,-5), (3,4,5), (0,5,-4) \}.
\]

Define the representations given above of $C_n$ for Alice and of $\overline{C_n}$ for Bob, and so $d_1 = n-2$ and $d_2=3$. Then the corresponding $n$ product states satisfy the conditions identified in the discussion above, with $d_1 + d_2 = n-2 +3 = n + 1$. Thus, the states form an unextendible product basis with the required graph representation conditions for the Alice and Bob graphs, and the result follows. 
\end{proof}

\section{Structure of the set of Distinguishable Graphs}\label{sec:structure}

In previous work, the set of graphs associated with one-way LOCC distinguishability was identified to be exactly the set of chordal graphs \cite{KMNP2024}. In this section, we investigate closure properties of  the set of distinguishable graphs and relate $\mathcal{G}$  to well-known graph classes. 

We start with some easily derived closure properties, which will be useful in what follows. 
Recall that for two graphs $G_1 = (V_1,E_1)$ and $G_2 = (V_2,E_2)$, the disjoint union $G = G_1 \sqcup G_2$ is the graph with vertices $V = V_1 \cup V_2$ and edges $E = E_1 \cup E_2$. The join $G = G_1 \vee G_2$ is the disjoint union of $G_1$ and $G_2$ together with all edges joining $V_1$ and $V_2$. 

\begin{lemma}\label{Lemma First Closure}
The set of distinguishable graphs $\mathcal G$ has the following closure properties. For graphs $G, G_1, G_2$ in $\mathcal{G}$, we have: 
\begin{itemize}
\item{} $\overline{G} \in \mathcal{G}$, 
\item{} $G_1 \sqcup G_2 \in \mathcal{G}$,
\item{} $G_1 \vee G_2 \in \mathcal{G}$.
\end{itemize}
\end{lemma}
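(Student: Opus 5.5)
The plan is to prove the three closure properties in order, deriving the third from the first two via the identity $\overline{G_1 \vee G_2} = \overline{G_1} \sqcup \overline{G_2}$.

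\emph{Complement.} Let $G \in \mathcal G$ and take product states $\{\ket{\psi^A_k}\otimes\ket{\psi^B_k}\}$ with $G_A \le \overline{G} \le \overline{G_B}$. Taking complements, this is equivalent to $G_B \le G \le \overline{G_A}$.

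Now relabel the parties, so that Bob's local states play the role of ``Alice's'' states. The relabelled family is a set of bipartite product states whose first-party graph is $G_B$ and whose second-party graph is $G_A$, sandwiching $G$. Since $G \in \mathcal G$, there is a finite LOCC protocol distinguishing them in which the first party, now Bob, moves first.

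To meet the convention in Definition~\ref{Bipartite G Definition} that Alice moves first, prepend a trivial first move by Alice: she measures the one-outcome POVM $\{I_A\}$ and announces it, exactly as in the discussion of Example~\ref{1way2wayeg}. This leaves the states unchanged and adds one step, so the protocol remains finite. Hence $\overline{G} \in \mathcal G$.

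\emph{Disjoint union.} Let $G = G_1 \sqcup G_2$ on $V_1 \cup V_2$, and suppose $G_A \le G \le \overline{G_B}$. Since $G$ has no edges between $V_1$ and $V_2$, neither does $G_A$. So every $\ket{\psi^A_j}$ with $j\in V_1$ is orthogonal to every $\ket{\psi^A_k}$ with $k \in V_2$.

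Alice's first move is the projective measurement $\{P, I_A - P\}$, where $P$ is the projection onto $\mathrm{span}\{\ket{\psi^A_j}: j\in V_1\}$. States indexed by $V_1$ lie in the range of $P$, and those indexed by $V_2$ lie in its kernel. So the outcome identifies the block with certainty and leaves every surviving state unchanged (not merely up to normalisation).

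She announces the outcome, and the parties are left with the subfamily indexed by $V_i$. Its graphs are the induced subgraphs $G_A[V_i]$ and $G_B[V_i]$. The sandwich condition restricts to induced subgraphs, since $\overline{G_B}[V_i] = \overline{G_B[V_i]}$, giving $G_A[V_i] \le G_i \le \overline{G_B[V_i]}$. Because $G_i \in \mathcal G$, this subfamily is distinguishable by a finite protocol starting with Alice, and she may simply continue as the next mover, or insert a trivial step if needed. Concatenating gives a finite protocol, so $G_1 \sqcup G_2 \in \mathcal G$.

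\emph{Join.} From $\overline{G_1 \vee G_2} = \overline{G_1} \sqcup \overline{G_2}$ and the first two parts, $\overline{G_i}\in\mathcal G$ and hence $\overline{G_1}\sqcup\overline{G_2} \in \mathcal G$. Applying the complement property once more gives $G_1 \vee G_2 \in \mathcal G$.

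The only delicate points are bookkeeping:
\begin{itemize}
\item the Alice-first convention in the complement step, handled by the trivial initial measurement;
\item checking that in the disjoint-union step the post-measurement states are exactly the original ones, so the induced-subgraph sandwich holds in the same Hilbert space and Definition~\ref{Bipartite G Definition} applies verbatim.
\end{itemize}
I expect the complement step, and specifically making the role-swap rigorous with respect to the convention, to be the main point needing care.
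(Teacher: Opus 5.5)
Your proof is correct and follows the paper's argument step for step. The complement case uses the Alice/Bob role swap, the disjoint union uses Alice's projection onto orthogonal subspaces followed by the induced-subgraph sandwich, and the join follows from $\overline{G_1 \vee G_2} = \overline{G_1} \sqcup \overline{G_2}$. Your trivial initial measurement makes explicit how the Alice-first convention is met after the role swap, a point the paper leaves implicit.
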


\begin{proof}
For the first property, note that shifting from $G$ to $\overline{G}$ is equivalent to interchanging the roles of Alice and Bob: Start with any set of bipartite states whose graphs $(G_A, G_B)$ satisfy $G_A \le \overline{G} \le \overline{G_B}$. 
Then $G_B  \le G \le \overline{G_A}$. Since $G \in \mathcal{G}$, we know that the states represented by $(G_B,G_A)$ can be distinguished with LOCC. Because we allow communication in both directions, interchanging the roles of $G_A$ and $G_B$ does not change this. Hence $\overline{G} \in \mathcal{G}$. 

Next consider the disjoint union $G_1 \sqcup G_2$. Given a  set of bipartite states whose graphs $(G_A, G_B)$ satisfy $G_A \le (G_1 \sqcup G_2 ) \le \overline{G_B}$, then Alice's states corresponding to $V_1$ and $V_2$ lie in orthogonal subspaces and she can perform a measurement to distinguish them without otherwise disturbing the states. Hence the corresponding post-measurement graphs satisfy $G_A[V_i] \le G_i \le \overline{G_B[V_i]}$ for $i = 1, 2$. Since $G_i \in \mathcal{G}$, there is an LOCC protocol to distinguish the remaining states. 

Finally, observe that $G_1 \vee G_2 = \overline{\left(\overline{G_1} \sqcup \overline{G_2}\right)}$ can be constructed from complements and disjoint unions, and hence the previous properties imply that $G_1 \vee G_2 \in \mathcal{G}$.
\end{proof}


Many natural classes of graphs are closed under taking of induced subgraphs. We would like to show that the set of distinguishable graphs $\mathcal{G}$ has this property. This is not entirely trivial. If $H$ is an induced subgraph of $G$, any representation of $G$ can be restricted to a representation of $H$, but we will generally not get every representation of $H$ this way. In order to prove this result, we will focus on local measurements that respect the structure of $G$, motivating the following definition.

\begin{definition}\label{G-preserving}
For a graph $G$ and a set of states $\{\ket{\psi_i}\}$ on $\mathcal H$ that is a representation of a spanning subgraph of $G$, we say that a measurement $\{E_k\}$ on $\mathcal H$ is {\bf $G$-preserving} if, for all $k$ and for all pairs of vertices  $v_i \not\sim v_j$ that are not neighbors in $G$, $\bra{\psi_i}E_k\ket{\psi_j} = 0$.
\end{definition}
We do not require the measurement to respect {\it all} orthogonal pairs, only those needed to remain a subgraph of $G$. 

\begin{lemma}\label{inducedsubgraphs}
    If $G \in \mathcal{G}$ and $H$ is an induced subgraph of $G$, then $H \in \mathcal{G}$. 
\end{lemma}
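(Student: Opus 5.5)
Let $H = G[W]$ for a vertex subset $W\subseteq V$, and let $U = V\setminus W$. Take an arbitrary set of orthogonal product states $\{\ket{\alpha_w}\otimes\ket{\beta_w}\}_{w\in W}$ on $\mathcal H_A\otimes\mathcal H_B$ with graphs $(H_A,H_B)$ satisfying $H_A\le H\le \overline{H_B}$. My plan is to embed this set into a larger set of product states whose graphs $(G_A,G_B)$ satisfy $G_A\le G\le\overline{G_B}$. Since $G\in\mathcal G$, the enlarged set is LOCC distinguishable. The goal is then to read off from that protocol one for the original set. Two natural routes are restricting the protocol to $W$, or choosing the extra states so that a first $G$-preserving measurement cleanly separates them. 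In fact restriction alone already suffices: if a protocol distinguishes a set, it distinguishes every subset. So the whole content is the construction of the extension.

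\textbf{Construction.} Enlarge both local spaces to $\mathcal H_A' = \mathcal H_A\oplus \C^{m}$ and $\mathcal H_B'=\mathcal H_B\oplus\C^{m'}$; embedding is harmless since LOCC on the original states is unaffected by the extra dimensions. For each $u\in U$ I need vectors $\ket{\alpha_u}\in\mathcal H_A'$ and $\ket{\beta_u}\in\mathcal H_B'$ satisfying three conditions. First, $\ket{\alpha_u}\perp\ket{\alpha_x}$ whenever $u\not\sim x$ in $G$, so that $G_A\le G$. Second, $\ket{\beta_u}\perp\ket{\beta_x}$ whenever $u\sim x$ in $G$, so that $G_B\le\overline G$. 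Third, $\ket{\alpha_u}\otimes\ket{\beta_u}$ is orthogonal to all other states, which follows automatically from the first two conditions since every pair is either adjacent or nonadjacent in $G$. The graph conditions only require orthogonality in certain pairs; non-orthogonality where $G$ has an edge is not required, because $G_A$ need only be a spanning subgraph of $G$.

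The simplest choice is to give each $u\in U$ fresh orthonormal directions: $\ket{\alpha_u}=\ket{e_u}$ and $\ket{\beta_u}=\ket{f_u}$, with $e_u$ and $f_u$ new basis vectors orthogonal to everything else. Then $\ket{\alpha_u}$ is orthogonal to every other Alice vector, so $u$ is isolated in $G_A$, and likewise $u$ is isolated in $G_B$. Hence $G_A$ is $H_A$ plus isolated vertices, which satisfies $G_A\le G$ because $H_A\le H=G[W]$. Similarly $G_B$ is $H_B$ plus isolated vertices, which satisfies $G_B\le\overline G$ because $\overline{G}[W]=\overline H\ge H_B$. Orthogonality of the enlarged product set holds because the new states are orthogonal on both sides. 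I must double-check that Definition~\ref{Bipartite G Definition} allows arbitrary local dimensions; as phrased ("every set of bipartite product states"), it does.

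\textbf{Recovering a protocol for $H$.} Since $G\in\mathcal G$, there is a finite LOCC protocol $\mathbb P$ on $\mathcal H_A'\otimes\mathcal H_B'$, with Alice moving first, that identifies every state of the enlarged set with certainty. Applied to a state from $W$, it outputs the correct label with probability one. The remaining point is that $\mathbb P$ acts on the bigger space while our states live in $\mathcal H_A\otimes\mathcal H_B$. This is handled by compressing each local POVM element $E$ to $P E P$, where $P$ is the projection onto the original space. The compressions still form POVMs, and they give the same outcome statistics and post-measurement states on vectors supported in the original space. Alternatively, one can simply regard $\mathcal H_A\subseteq\mathcal H_A'$ as physically embedded. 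Either way we get an LOCC protocol distinguishing the original set, so $H\in\mathcal G$.

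The main obstacle I anticipate is checking this compression carefully through multiple rounds, where intermediate instruments rather than bare POVMs are used. If that becomes awkward, the fallback is to note that LOCC distinguishability of states supported in a subspace $\mathcal H_A\otimes\mathcal H_B$ does not depend on the ambient local dimensions. I would argue this by isometric embedding and the standard fact that an LOCC instrument composed with local isometries remains LOCC. The $G$-preserving notion introduced above then seems unnecessary for this route. I would nonetheless keep it in mind in case the definition of $\mathcal G$ is intended with fixed dimensions, in which case the extension must reuse existing directions and a $G$-preserving first measurement separating $U$ from $W$ would be needed.
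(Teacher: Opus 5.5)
Your proof is correct and is essentially the paper's argument: pad the given set with $|V\setminus W|$ product states built from fresh orthonormal directions on both sides, so the new vertices are isolated in both $G_A$ and $G_B$, then apply $G\in\mathcal G$ to the enlarged set and restrict the resulting protocol to the original states. The paper passes to the extended Hilbert spaces without comment. Of your two justifications, the isometric (ancilla) embedding is the one that works cleanly, since compressing POVM elements to $PEP$ preserves outcome statistics but not, in general, the post-measurement states needed in later rounds.
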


\begin{proof}
Suppose $G \in \mathcal{G}$ with $n$ vertices, and let $H$ be an induced subgraph of $G$ with $m<n$ vertices. We wish to show that $H \in \mathcal{G}$.

Let $\mathcal S = \{ \ket{\psi_k^A} \ot \ket{\psi_k^B} \}_{k = 1}^m$  be a set of product states with orthogonality graphs $H_A$ and  $H_B$  such that $H_A \le H \le \overline{H_B}$.  We extend these to spanning subgraphs of $G$ and $\overline{G}$ by adding $(n-m)$ isolated vertices to each of them. Define $G_A := H_A \sqcup \overline{K_{n-m}}$ and $G_B := H_B \sqcup \overline{K_{n-m}}$. 

Set $\mathcal{S}^\prime := \{\ket{\varphi_i^A} \ot \ket{\varphi_i^B} \}_{i=m+1}^{n}$, where $\{\ket{\varphi_i^A}\}$ and  $\{\ket{\varphi_i^B}\}$ are each orthonormal sets in extended Hilbert spaces such that for all $i,k$, $\bk{\varphi_i^A}{\psi_k^A} = \bk{\varphi_i^B}{\psi_k^B} = 0$. Then $G_A$ and $G_B$ are the Alice and Bob graphs for the set $\mathcal{S} \cup \mathcal{S}^\prime$.

The only edges in $G_A$ are the edges in $H_A$, so $G_A \le G$. Similarly, the only edges in $G_B$ are the edges in $H_B$, so  $H_B \le \overline{H}$ implies that $G_B \le \overline{G}$. Hence we have $G_A \le G \le \overline{G_B}$. As $G \in \mathcal{G}$, there exists an LOCC protocol that distinguishes the set of states  $\mathcal{S} \cup \mathcal{S}^\prime$, which implies that it distinguishes the original set of states $\mathcal S$, and thus $\mathcal{S}$ is LOCC-distinguishable. 

Therefore, for any set of product states $\mathcal S$ with orthogonality graphs $H_A$ and  $H_B$  such that $H_A \le H \le \overline{H_B}$, the states of $\mathcal{S}$ can be distinguished with LOCC, and this implies that $H \in \mathcal{G}$.
\end{proof}

While Lemma~\ref{inducedsubgraphs} started with a large graph $G \in \mathcal{G}$ and showed that its induced subgraphs remain in $\mathcal{G}$, it is frequently useful to be able to go in the other direction, to build up larger graphs from smaller ones.  In Example \ref{ex:not one way but three way}, Alice's graph has a cut vertex $v_5$, and the initial projective measurement in 
Figure~\ref{fig:locc-tree-example} produces two branches that are separated by this cut vertex.  This is an example of a more general phenomenon in which useful quantum measurements are suggested by separating sets of vertices that induce a clique. 

Going in the other direction, we can build a graph $G$ from smaller graphs $G_1$ and $G_2$ by identifying the vertices of a clique in $G_1$ with the vertices of a clique in $G_2$. This graph operation is called a {\it clique sum} (a notion that goes back almost a century \cite{wagner1937eigenschaft}), and is written $G = G_1 \oplus G_2$ when the cliques and the identifications are clear. 
Note that in some contexts, clique sums allow the removal of edges from the shared clique, but in what follows no such deletion is not permitted. For emphasis, we will frequently call it an edge-preserving clique sum.  

We establish the fact that $\mathcal{G}$ is closed under edge-preserving clique sums. As with induced subgraphs, any representation of $G_1, G_2, \ldots, G_r$ can be concatenated to create a representation of $G = \oplus G_i$, but not all representations of $G$ are obtained this way.

\begin{lemma}\label{cliquesum}
Let $G$ be a graph with cut set $V_0$ such that the induced graph $G[V_0]$ is a clique. Suppose that $G[V\backslash V_0]$ has $r \ge 2$ components. Define $V_i$ to be the vertices of the $i$th component and let $G_i = G[V_i \cup V_0]$ with $G_0 = G[V_0]$. 

If $G_1, G_2, \ldots, G_r \in  \mathcal{G}$, then $G \in  \mathcal{G}$. That is, $\mathcal{G}$ is closed under edge-preserving clique sums. 
\end{lemma}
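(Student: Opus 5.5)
The plan is to give an explicit first measurement by Alice that splits the problem into branches, each of which is either trivial or governed by one of the graphs $G_i$. Fix product states $\{\ket{\psi_k^A}\ot\ket{\psi_k^B}\}$ with $G_A \le G \le \overline{G_B}$. Two structural facts are read off from $G$. First, for $i\ne j$ with $i,j\ge 1$, vertices of $V_i$ and $V_j$ are non-adjacent in $G$, hence in $G_A$, so Alice's states indexed by different components are orthogonal. Second, $V_0$ is a clique in $G$, so no two vertices of $V_0$ are adjacent in $G_B$; thus Bob's states indexed by $V_0$ are mutually orthogonal.

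Let $Q_i$ be the projection onto $\mathrm{span}\{\ket{\psi_u^A}: u\in V_i\}$ for $1\le i\le r$. By the first fact, these are mutually orthogonal projections, so $\{Q_1,\dots,Q_r,\, I-\sum_i Q_i\}$ is a legitimate projective measurement for Alice. Alice performs it and announces the outcome.

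On outcome $I-\sum_i Q_i$, every state in $V_1\cup\dots\cup V_r$ is annihilated, so only states from $V_0$ survive. Their Bob parts are untouched and mutually orthogonal, so Bob measures in a basis containing them and finishes.

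On outcome $Q_i$, the states indexed by $V_j$ with $j\ne i$ vanish and those indexed by $V_i$ are unchanged. Each state indexed by $v\in V_0$ becomes $Q_i\ket{\psi_v^A}\ot\ket{\psi_v^B}$, which may vanish. The key check is that this step is $G_i$-preserving in the sense of Definition~\ref{G-preserving}, i.e.\ that the surviving states have Alice and Bob graphs $G_A'$, $G_B'$ with $G_A'\le G_i[\cdot]\le\overline{G_B'}$ on the surviving vertex set.
\begin{itemize}
\item For $u\in V_i$ and $v\in V_0$ we have $\bra{\psi_u^A}Q_i\ket{\psi_v^A}=\bk{\psi_u^A}{\psi_v^A}$, since $Q_i\ket{\psi_u^A}=\ket{\psi_u^A}$. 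So Alice's orthogonalities between $V_i$ and $V_0$ are exactly preserved.
\item Alice's orthogonalities within $V_i$ are preserved trivially, and pairs inside $V_0$ are adjacent in $G_i$, so they impose no constraint.
\item Bob's states are unchanged, so Bob's graph on the survivors is still $G_B$ restricted.
\end{itemize}
Hence the surviving states are a valid instance for the induced subgraph of $G_i$ on the survivors. Mutual orthogonality of the post-measurement product states is also kept: $V_0$–$V_0$ pairs are orthogonal on Bob's side, and $V_i$–$V_0$ and $V_i$–$V_i$ pairs are orthogonal on whichever side they were before, the Alice inner products being unchanged.

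Since $G_i\in\mathcal G$, Lemma~\ref{inducedsubgraphs} gives that the induced subgraph on the survivors lies in $\mathcal G$ as well; this covers the case where some projected $V_0$ states vanish. So there is a finite LOCC protocol, with Alice moving first, that finishes this branch. Concatenating the finitely many branch protocols gives a finite LOCC protocol for the original set, which shows $G\in\mathcal G$.

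The step I expect to need the most care is this verification in the $Q_i$ branch. One must make sure that projecting the $V_0$ states does not create Alice non-orthogonality that $G_i$ forbids, and that vanishing vectors are handled correctly. The observation that $Q_i$ fixes every $V_i$ state resolves the first point, and Lemma~\ref{inducedsubgraphs} resolves the second.
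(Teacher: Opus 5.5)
Your proposal is correct and follows essentially the same route as the paper: Alice projects onto the mutually orthogonal spans of her states indexed by each component $V_i$, together with the complementary projection. The measurement preserves the needed orthogonalities because every non-adjacent pair has at least one vertex outside the clique $V_0$, and each branch is finished by applying Lemma~\ref{inducedsubgraphs} to the induced subgraph of $G_i$ on the surviving vertices. Your explicit treatment of the complementary outcome also spells out a case the paper handles only implicitly: only $V_0$ states survive, and they are orthogonal on Bob's side because $V_0$ is a clique in $G$.
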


\begin{proof}
Let $G$ be a graph satisfying the hypotheses of the lemma, and let $\mathcal S = \{ \ket{\psi_k^A} \ot \ket{\psi_k^B} \}_k$  be a set of product states on $\H_A \ot \H_B$ with associated graphs such that $G_A \le G \le \overline{G_B}$. 

Since $V_0$ is a cut set, we can decompose $\displaystyle \H_A = \oplus_{i = 0}^r \H_i$ into orthogonal subspaces such that for $i=1,\dots,r$, $\H_i$ is the span of $\{\ket{\psi_j^A}: v_j \in V_i\}$, and $\H_0$ is the (possibly trivial) orthogonal complement of the other subspaces. Alice can perform a measurement $\{P_i\}$ that projects onto the subspaces $\{\H_i\}$. Suppose that we get the outcome $k$. Then, our unknown state must correspond to a vertex in $V_k \cup V_0$ (even when $k=0$). All states corresponding to vertices outside this set are eliminated, leaving Alice with the reduced graph $G_A^{k}$. 

We show that this measurement is $G$-preserving by considering vertices $v_i, v_j$ that are not adjacent in $G$. This implies that $\bk{\psi_j^A}{\psi_i^A}=0$, since $G_A \le G$. Because $G[V_0]$ is a clique, at least one of these vertices (say $v_i$) is not in $V_0$. Since $\ket{\psi_i^A}$ is supported on $\H_i$, for each $k$, $P_k\ket{\psi_i^A} = \delta_{i,k} \ket{\psi_i^A}$. Hence $\bra{\psi_j^A} P_k \ket{\psi_i^A} = \delta_{i,k} \bk{\psi_j^A}{\psi_i^A} = 0$. So, the measurement is $G$-preserving, and Alice's post-measurement graph $G_A^{k}$ is a subgraph of $G_k$. 

We now show that we can complete the LOCC measurement. Let $U_k$ be the set of vertices $v_j$ for which $P_k \ket{\psi_j^A} \ne 0$. If $k \ne 0$, then $V_k \subseteq U_k \subseteq (V_k \cup V_0)$.
By Lemma~\ref{inducedsubgraphs}, $G_k \in \mathcal{G}$ implies that the induced subgraph $G_k[U_k]$ is also in $\mathcal{G}$. If we restrict Bob's graph to the remaining possible states, we write $G_B^{k} = G_B[U_k] \le \overline{G_k[U_k]}$. This implies that
\bee
G_A^{k} \le G_k[U_k] \le \overline{G_B^{k}} . 
\eee

The fact that $G_k[U_k] \in \mathcal{G}$ implies that any representation of $(G_A^{k}, G_B^{k})$ can be distinguished with LOCC, in particular our set $\mathcal{S}_k$. Hence, no matter which value of $k$ we get as a measurement outcome, we can complete the process of distinguishing the states $\mathcal S$ using LOCC, implying that $G \in \mathcal{G}$. 
\end{proof}

We construct an example in which we generate an LOCC measurement for a clique sum as described in Lemma~\ref{cliquesum}.

\begin{example}\label{cliquesum}
{\rm 
Let $H = \overline{P_5}$   be the house graph on 5 vertices. This is a 4-cycle attached to a 3-cycle, and it is the complement of a path on 5 vertices. Since its complement is chordal, $H \in \mathcal{G}$. 
 
Define the graph $G = H \oplus H$ to be the clique sum of two copies of $H$ joined along a 3-cycle such that the simplicial vertices are not identified with each other. In Figure~\ref{fig:clique-sum-example}, we note that the induced subgraphs on the first five vertices and on the last five vertices are each isomorphic to $H$.  

	\begin{figure}[h!]
		\begin{center}
        
\begin{tikzpicture}[scale=1.3, baseline=(current bounding box.north)]
\draw[very thick] (0,1)--(1,2)--(2,1)--(1,0)--cycle;
\draw[very thick]  (1,2)--(3,2);
\draw[very thick] (4,1)--(3,2)--(2,1)--(3,0)--cycle;
\draw[fill] (0,1) node[anchor=east] {$v_1$} circle [radius=2pt];
\draw[fill] (1,0) node[anchor=north] {$v_2$} circle [radius=2pt];
\draw[fill] (2,1) node[anchor=north] {$v_3$} circle [radius=2pt];
\draw[fill] (3,2) node[anchor=south] {$v_5$} circle [radius=2pt];
\draw[fill] (1,2) node[anchor=south] {$v_4$} circle [radius=2pt];
\draw[fill] (4,1) node[anchor=east] {$v_6$} circle [radius=2pt];
\draw[fill] (3,0) node[anchor=north] {$v_7$} circle [radius=2pt];
\end{tikzpicture} 

		\end{center}
			\caption{Graph clique sum $G = H \oplus H$ in Example~\ref{cliquesum}.}
            \label{fig:clique-sum-example}
	\end{figure}

We claim that any product state representation with $G = G_A = \overline{G_B}$ of $G$ can be distinguished with LOCC by building on an existing protocol associated with the graph $H$. One such representation of (unnormalized) states is given below. Note that the specifics of Bob's representation are not needed to reduce the problem from $G$ to $H$. 
{\rm 
\begin{align*}
\ket{\psi_{1}}& = \ket{0}_A \ot \ket{1 + 2}_B &
 \ket{\psi_{2}} &=   \ket{0 + 1}_A\ot \ket{0+3}_B  \\
\ket{\psi_{3}} &= \ket{1+2}_A \ot \ket{2}_B & 
\ket{\psi_{4}} &=  \ket{0 - 1 + 4}_A\ot \ket{0}_B\\  
\ket{\psi_{5}} &= \ket{2 - 3 + 4 }_A\ot \ket{1}_B & 
\ket{\psi_{6}} &=  \ket{3}_A \ot \ket{2+3}_B \\ 
\ket{\psi_{7}} &=  \ket{2+3}_A\ot \ket{1+3}_B
\end{align*}}
We claim that any protocol to locally distinguish  states associated with $H$  can be incorporated into a protocol to distinguish all the states associated with $G$. 
We follow the procedure in the proof of Lemma~\ref{cliquesum}, setting $V_0 = \{v_3, v_4,v_5\}$, which induces a clique. Then $V_1 = \{ v_1, v_2 \}$ and $V_2 = \{v_6,v_7\}$. Alice performs the measurement $\{ P_1, P_2, P_0\}$, where $P_i$ projects onto the span of the states corresponding to $V_i$ for $i = 1,2$ and $P_0$ is everything else; that is,  
\bee 
P_1 = \kb{0}{0}_A + \kb{1}{1}_A, \qquad 
P_2 = \kb{2}{2}_A + \kb{3}{3}_A, \qquad 
P_0 = \kb{4}{4}_A .
\eee

If we get the outcome $P_0$, the remaining states (possible fewer than three) induce a clique in $G$ and form an independent set in $\overline{G} \ge G_B$, so Bob can distinguish them perfectly. 
 Suppose we get either outcome $P_1$ or $P_2$. In this case only four possible states remain, forming a copy of $C_4$ on $(v_1,v_2,v_3,v_4)$ or $(v_3,v_5,v_6,v_7)$. Since $C_4$ is an induced subgraph of $H$ and we assumed that $H \in \mathcal{G}$, we know that $C_4 \in \mathcal{G}$ from Lemma~\ref{inducedsubgraphs} and we can complete the measurement using any specified protocol for $H$, restricting Alice's first measurement to the span of the remaining states. 
}
\end{example}
 
By combining Lemma \ref{Lemma First Closure}, Lemma \ref{inducedsubgraphs}, and Lemma \ref{cliquesum}, we obtain the following result.

\begin{theorem}\label{ClassificationTheorem}
	The set ${\mathcal G}$ is closed under complements, disjoint unions, and joins. Further, it is closed under taking induced subgraphs and edge-preserving clique sums. 
\end{theorem}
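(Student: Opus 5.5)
Theorem~\ref{ClassificationTheorem} is a consolidation of the three preceding lemmas, so the proof will assemble them and check that their hypotheses match the wording of the theorem. Closure under complements, disjoint unions and joins is exactly the content of Lemma~\ref{Lemma First Closure}. The first item there swaps the roles of Alice and Bob, using that the pair $(G_B,G_A)$ satisfies $G_B\le G\le\overline{G_A}$. The second uses Alice's projective measurement onto the orthogonal spans of the two components. The third uses the identity $G_1\vee G_2=\overline{\overline{G_1}\sqcup\overline{G_2}}$. Closure under induced subgraphs is Lemma~\ref{inducedsubgraphs}, which pads any representation of $H$ with mutually orthogonal states on fresh dimensions so that it becomes a representation of a pair sandwiching $G$.

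For edge-preserving clique sums, I would first translate the operation into the hypothesis of Lemma~\ref{cliquesum}. Suppose $G=G_1\oplus G_2$ is obtained by identifying a clique $K$ in $G_1$ with a clique $K$ in $G_2$ and keeping all edges of $K$. Then $V_0=K$ induces a clique in $G$. There are no edges between $V(G_1)\setminus K$ and $V(G_2)\setminus K$, so removing $V_0$ separates them. The graphs $G[V(G_i)]$ recover $G_i$ exactly, because no edge of $K$ was deleted; this is why the sum must be edge-preserving.

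The one point needing care is that $G\setminus V_0$ may have more than two components, and the components of $G\setminus V_0$ need not be $G_1\setminus K$ and $G_2\setminus K$ themselves. Each component $C$ lies inside some $V(G_i)\setminus K$. Hence $G[C\cup V_0]$ is an induced subgraph of $G_i$, and by Lemma~\ref{inducedsubgraphs} it lies in $\mathcal G$. Lemma~\ref{cliquesum} then applies to the components of $G\setminus V_0$ and gives $G\in\mathcal G$. Degenerate cases need a short remark:
\begin{itemize}
\item If one side is empty, $G$ equals $G_1$ or $G_2$, and there is nothing to prove.
\item If $K=\emptyset$, the clique sum is a disjoint union, which is covered by Lemma~\ref{Lemma First Closure}.
\end{itemize}
Iterating over finitely many summands gives closure for sums of any finite length.

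There is no genuine obstacle. The only step requiring attention is the component bookkeeping above: matching the abstract clique-sum operation to the cut-set formulation of Lemma~\ref{cliquesum}, including the reduction from the given summands to the subgraphs induced on each component together with $V_0$.
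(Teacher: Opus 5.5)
Your proposal is correct and takes the same route as the paper, which proves this theorem simply by combining the complement/disjoint-union/join lemma, the induced-subgraph lemma, and the clique-sum lemma. Your extra bookkeeping makes explicit the ``that is, $\mathcal G$ is closed under edge-preserving clique sums'' step that the paper leaves implicit in the clique-sum lemma: when both sides of $G_1\oplus G_2$ are nonempty, the shared clique is a cut set, and each $G[C\cup V_0]$ is an induced subgraph of some summand, hence lies in $\mathcal G$.
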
 

Theorem \ref{ClassificationTheorem} has immediate application in terms of where the set $\mathcal{G}$ of distinguishable graphs fits in terms of common graph classes. 

A \emph{cograph} \cite{corneil1981complement} is a graph that can be generated from an isolated vertex (that is, a vertex with no incident edges) by repeated use of graph complementation and disjoint union. Seven equivalent conditions for a graph to be a cograph are given by \cite[Theorem 2]{corneil1981complement}. The simplest and best known of these is that a graph is a cograph if and only if it has no induced subgraph that is isomorphic to $P_4$ (the path graph on 4 vertices). Since $K_1 \in \mathcal{G}$ and   $\mathcal{G}$ is closed under complements and disjoint unions, we have the following result.  

\begin{cor}\label{cographcor}
	The set of cographs is contained in the set $\mathcal{G}$.
\end{cor}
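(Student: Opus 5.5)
The plan is a structural induction on how a cograph is built, resting on the closure properties in Theorem~\ref{ClassificationTheorem} (specifically Lemma~\ref{Lemma First Closure}). By definition, a cograph arises from the single isolated vertex $K_1$ through finitely many complementations and disjoint unions. So it is enough to check two things: that $K_1\in\mathcal{G}$, and that $\mathcal{G}$ is closed under both operations. The induction will run on the number of operations in a construction sequence; equivalently, on the depth of the cotree.

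For the base case I will verify $K_1\in\mathcal{G}$ directly. A set of states whose graphs satisfy $G_A\le K_1\le\overline{G_B}$ consists of a single product state. Alice and Bob know the set in advance, so they identify the state with probability one using the trivial protocol, which has no measurement at all (or a trivial one-outcome measurement). The same conclusion also follows from Theorem~\ref{splitchordalthm}, since $K_1$ is a split graph and hence lies in $\mathcal{G}_0\subseteq\mathcal{G}$.

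For the inductive step, let $G$ be a cograph built in $k$ steps. Its last operation is one of two kinds. If it is a complementation, then $G=\overline{G'}$ for a cograph $G'$ built in fewer steps. By the inductive hypothesis $G'\in\mathcal{G}$, and the first bullet of Lemma~\ref{Lemma First Closure} then gives $G\in\mathcal{G}$. If it is a disjoint union, then $G=G_1\sqcup G_2$ with each $G_i$ a cograph built in fewer steps. By the inductive hypothesis both $G_i$ lie in $\mathcal{G}$, and the second bullet of Lemma~\ref{Lemma First Closure} gives $G\in\mathcal{G}$. Joins need no separate treatment, because $G_1\vee G_2=\overline{\overline{G_1}\sqcup\overline{G_2}}$.

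I do not expect a real obstacle, since all the LOCC content is already in Lemma~\ref{Lemma First Closure}. The one point needing care is that the definition of a cograph used here, the generative one from \cite{corneil1981complement}, has to match the $P_4$-free characterization if one wants to state the corollary in that form. I will cite \cite[Theorem 2]{corneil1981complement} for that equivalence and not reprove it.
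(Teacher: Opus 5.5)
Your proposal is correct and matches the paper's argument: the paper also derives the corollary directly from the generative definition of cographs, using $K_1 \in \mathcal{G}$ together with the closure of $\mathcal{G}$ under complements and disjoint unions (Lemma~\ref{Lemma First Closure}). Your explicit structural induction and base-case check simply spell out what the paper leaves implicit.
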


The set of cographs that are also split graphs is known as the set of threshold graphs; see, e.g.~\cite{mahadev1995threshold}. Threshold graphs can be constructed from a one-vertex graph through  repeated additions of a single vertex of two possible types: an isolated vertex   or a dominating vertex  (that is, a single vertex that is connected to all other vertices). In particular, the sets of split graphs and cographs are not subsets of one another, but both are subsets of $\mathcal{G}$ (by Theorem~\ref{splitchordalthm} and Corollary~\ref{cographcor}). 

Let us point out that $\mathcal G$ is not closed under every natural graph operation. Recall  that the tensor product $G_1\times G_2$ of graphs $G_1$ and $G_2$ is a graph having vertex set equal to  the cartesian product $V(G_1) \times  V(G_2)$, where vertices $(v_1,v_2)$ and $(v_1',v_2')$ are adjacent in $G_1\times G_2$  if and only if
$v_1$ is adjacent to $v_1'$ in $G_1$ and
$v_2$ is adjacent to $v_2'$ in $G_2$. If $G_1$ and $G_2$ have adjacency matrices $A_1$ and $A_2$, then the adjacency matrix of $G_1 \times G_2$ is $A_1 \ot A_2$. 
Similarly, the cartesian product $G_1\square G_2$ of graphs $G_1$ and $G_2$ is a graph having vertex set equal to  the cartesian product $V(G_1) \times  V(G_2)$, where vertices $(v_1,v_2)$ and $(v_1',v_2')$ are adjacent in $G_1\square G_2$  if and only if
$v_1$ is adjacent to $v_1'$ in $G_1$ and $v_2=v_2'$ , or $v_2$ is adjacent to $v_2'$ in $G_2$ and $v_1=v_1'$.

\begin{cor}
The set $\mathcal{G}$ is {\it not} closed under tensor products and is {\it not} closed under cartesian products. 
\end{cor}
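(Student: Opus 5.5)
The plan is to exhibit explicit graphs $G_1,G_2\in\mathcal G$ whose tensor product, respectively cartesian product, contains an induced cycle $C_n$ with $n\ge 5$. By Lemma~\ref{inducedsubgraphs}, membership of the product in $\mathcal G$ would force $C_n\in\mathcal G$, which contradicts Proposition~\ref{unextendible}. So it suffices to find small members of $\mathcal G$ and check an induced cycle.

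\emph{Cartesian product.} Take $G_1=G_2=K_2$, or more conveniently $G_1=K_2$ and $G_2=P_3$; both are cographs and hence lie in $\mathcal G$ by Corollary~\ref{cographcor}.
\begin{itemize}
\item $K_2\square K_2=C_4$ is in $\mathcal G$, so it is useless.
\item $K_2\square P_3$ is the $2\times3$ grid (the ``domino''). Its outer boundary is a $6$-cycle, but the middle rung is a chord, so that cycle is not induced.
\item Instead use $K_2\square K_3$, the triangular prism. Its $5$-cycle $(a_1,a_2,a_3,b_3,b_1)$ has the chords $a_1a_3$ and others, so it is not induced either.
\item The reliable choice is $P_3\square P_3$, the $3\times3$ grid, with $P_3$ a cograph. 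Its boundary $8$-cycle, which avoids the center vertex, is induced, because grid adjacency only occurs between vertices at Manhattan distance one and the boundary vertices that are consecutive around the square are exactly those. Hence $C_8$ is an induced subgraph, so $P_3\square P_3\notin\mathcal G$.
\end{itemize}

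\emph{Tensor product.} Again pick small cographs or split graphs. The tensor product $K_2\times G$ is the bipartite double cover, and $K_3\times K_2=C_6$. Since $K_3$ and $K_2$ are cographs, they are in $\mathcal G$. Checking directly, the vertices $(i,a)$ with $i\in\{1,2,3\}$ and $a\in\{0,1\}$ are adjacent exactly when $i\ne j$ and $a\ne b$. This gives a $2$-regular graph on $6$ vertices, and it is connected, since $(1,0)\sim(2,1)\sim(3,0)\sim(1,1)\sim(2,0)\sim(3,1)\sim(1,0)$. So it is exactly $C_6$, which is itself not in $\mathcal G$ by Proposition~\ref{unextendible}.

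The only real work is verifying that each chosen cycle is induced, that is, chordless. For the tensor case this is immediate from $2$-regularity. For the grid it follows from the coordinate description. No step appears hard; the main care needed is choosing factors small enough that the verification is clean while still certifying membership of the factors in $\mathcal G$ via Corollary~\ref{cographcor}.
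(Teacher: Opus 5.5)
Your proof is correct, and the tensor half is the paper's argument: $K_2\times K_3\cong C_6$, with $C_6\notin\mathcal G$ by Proposition~\ref{unextendible}. For the cartesian product you take a different route.

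The paper reuses the same factors. The prism $K_2\square K_3$ is exactly $\overline{C_6}$, since its complement is $K_{3,3}$ minus a perfect matching. So it lies outside $\mathcal G$ by the complement-closure property. You correctly observed that the prism has no induced $C_n$ with $n\ge 5$, but that is no reason to discard it: the obstruction sits in its complement.

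Your replacement, $P_3\square P_3$, is valid. $P_3$ is a cograph, so it lies in $\mathcal G$. The eight edges of the $3\times 3$ grid that avoid the centre form exactly the boundary cycle, so $C_8$ is an induced subgraph, and Lemma~\ref{inducedsubgraphs} together with Proposition~\ref{unextendible} finishes the argument.

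The trade-off is this. The paper's version uses one pair of factors for both claims and needs only the elementary complement closure. Yours relies on the less trivial induced-subgraph lemma and a larger example. In exchange, it exhibits a long induced cycle (a hole rather than an antihole) directly in the product.

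In a final write-up, drop the exploratory list of rejected candidates, and make the opening choice of factors match the one you actually use. Also, the $5$-cycle $(a_1,a_2,a_3,b_3,b_1)$ in the prism has $a_1a_3$ as its only chord, not ``$a_1a_3$ and others.''
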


\begin{proof} 
Setting $G_1 = K_2$ and $G_2 = K_3$, we see that $G_1, G_2 \in \mathcal{G}$. However, $K_2 \times K_3 = C_6$ and $K_2 \square K_3 = \overline{C_6}$, neither of which is a distinguishable graph by Proposition~\ref{unextendible} and the complement closure property in Lemma~\ref{Lemma First Closure}. Hence, $\mathcal{G}$ is {not} closed under tensor or cartesian products. 
\end{proof}

Finally, we show that the set of distinguishable graphs is properly contained inside a particular set of graphs. 

\begin{definition}
    A graph $G$ is {\bf weakly chordal} if neither $G$ nor its complement $\overline{G}$ contains an induced cycle $C_n$ with $n \ge 5$. 
\end{definition}

Weakly chordal graphs were first defined and studied in \cite{hayward1985weakly}.

\begin{cor}
    The set of distinguishable graphs $\mathcal{G}$ is a subset of the weakly chordal graphs, and this inclusion is proper. 
\end{cor}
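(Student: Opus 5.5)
The inclusion follows from the closure properties already established. Suppose $G \in \mathcal{G}$ and, for contradiction, $G$ is not weakly chordal. Then either $G$ contains an induced cycle $C_n$ with $n \ge 5$, or $\overline{G}$ does. In the first case, Lemma~\ref{inducedsubgraphs} shows that $C_n \in \mathcal{G}$, which contradicts Proposition~\ref{unextendible}. In the second case, Lemma~\ref{Lemma First Closure} gives $\overline{G} \in \mathcal{G}$. Applying Lemma~\ref{inducedsubgraphs} to $\overline{G}$ then gives $C_n \in \mathcal{G}$, the same contradiction. Hence $\mathcal{G}$ is contained in the weakly chordal graphs.

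For properness I must exhibit a weakly chordal graph outside $\mathcal{G}$, which means a concrete set of product states that is not LOCC-distinguishable. I will use the unextendible product basis approach from the proof of Proposition~\ref{unextendible}. The plan is to pick a small weakly chordal graph $G$ and build the states as follows. Alice's part is a general-position orthogonal representation of $G$ in dimension $d_1$. Bob's part is a general-position orthogonal representation of $\overline{G}$ in dimension $d_2$. The dimensions should satisfy $d_1 + d_2 \le n+1$, where $n = |G|$. The counting criterion stated after the lemma from \cite{bennett1999unextendible} then makes the states an unextendible product basis. Such a basis is not distinguishable even with separable measurements, so $G \notin \mathcal{G}$.

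The natural candidate is the Bennett et al.\ \textsc{Tiles} or \textsc{Pyramid} UPB in $\mathbb{C}^3 \otimes \mathbb{C}^3$ with five states, so $d_1 + d_2 = 6 = n+1$. Its Alice and Bob graphs are $C_5$ and its complement, which is not weakly chordal, so that choice fails. I would therefore turn to the \textsc{Shifts} UPB on three qubits, or to a larger bipartite UPB such as the $\mathbb{C}^3 \otimes \mathbb{C}^4$ ones with six states. There the Alice graph $G_A$ can be a proper spanning subgraph of $\overline{G_B}$, and I can hope to choose an intermediate $G$ with $G_A \le G \le \overline{G_B}$ that is weakly chordal. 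The key observation is that $\mathcal{G}$ membership quantifies over every pair $(G_A, G_B)$ sandwiching $G$. So it suffices to find one non-distinguishable set whose graphs sandwich some weakly chordal $G$; $G_A$ and $\overline{G_B}$ need not be equal. For $n = 6$, I would use general-position representations with $d_1 = 3$ and $d_2 = 4$ of graphs $G_A \le \overline{G_B}$, chosen so that some $G$ between them has no induced $C_5$, $C_6$, or complement thereof.

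The main obstacle is this existence step. I need a weakly chordal $G$ together with general-position representations of a spanning subgraph of $G$ and of a spanning subgraph of $\overline{G}$ whose dimensions sum to at most $n+1$. General-position representability in low dimension forces high connectivity of the represented graph, in the spirit of Alon--Lov\'asz, and weakly chordal graphs tend to have small clique separators. Both $G_A$ and $G_B$ must be fairly dense, which pushes $G$ toward $C_n$-like structure. I would first try small explicit searches, for instance $n = 6$ or $7$ with $G$ the complement of a suitable tree-like graph. Failing that, I would try $\overline{C_6}$ minus a perfect matching. Failing both, I would fall back on the \textsc{Shifts}-type constructions and verify non-extendibility by hand using the partition criterion.
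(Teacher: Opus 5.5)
Your proof of the inclusion is correct and is the paper's argument: Proposition~\ref{unextendible} together with closure of $\mathcal{G}$ under complements and induced subgraphs.

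The gap is in the properness half. You never produce a weakly chordal graph outside $\mathcal{G}$, and the construction you sketch cannot produce one in any of the cases you name. If Alice's vectors are in general position in $\C^{d_1}$, each is orthogonal to at most $d_1-1$ of the others; otherwise $d_1$ of them would lie in its $(d_1-1)$-dimensional orthogonal complement. Likewise each is orthogonal to at most $d_2-1$ others on Bob's side. Every state must be orthogonal to the other $n-1$, so $d_1+d_2\ge n+1$. Combined with your requirement $d_1+d_2\le n+1$, every bound is tight and every pair is orthogonal on exactly one side. Hence $G_A=\overline{G_B}$, and the slack you wanted, an intermediate $G$ strictly between them, does not exist. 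In particular, your claim that six-state UPBs in $\C^3\ot\C^4$ can have $G_A$ properly inside $\overline{G_B}$ is false, because unextendibility there already forces general position on both sides.

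Your specific candidates all fail:
\begin{itemize}
\item For $n=6$, $d_1=3$, $\overline{G}$ is $2$-regular. So $G$ is either $\overline{C_6}$, which is not weakly chordal, or $K_{3,3}$, which is a cograph and hence in $\mathcal{G}$ by Corollary~\ref{cographcor}.
\item $\overline{C_6}$ minus a perfect matching is $C_6$ or $2K_3$, which has the same problem.
\item If the graph you complement is a tree, or any chordal graph, then $G\in\mathcal{G}$ by Theorem~\ref{splitchordalthm} and Lemma~\ref{Lemma First Closure}.
\item \textsc{Shifts} is tripartite. Any bipartite grouping of it lives in $\C^2\ot\C^4$, where every orthogonal product set is LOCC-distinguishable.
\end{itemize}

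The missing idea is that you do not need a UPB at all. Any orthogonal product set that finite-round LOCC cannot distinguish certifies that its graph is not in $\mathcal{G}$. The paper uses eight of the nine Bennett et al.\ domino states: those other than $\ket{1}\ot\ket{1}$, which include the seven states of Example~\ref{ex:not one way but three way}. These cannot be perfectly distinguished with finitely many LOCC rounds, and their graph $G=G_A=\overline{G_B}$ (Figure~\ref{fig:NLWE Graph}) is weakly chordal. This set is extendible, since $\ket{1}\ot\ket{1}$ is a product vector orthogonal to all eight, so no unextendibility criterion could ever certify it. Its indistinguishability comes instead from the fact that neither party can make a nontrivial orthogonality-preserving first measurement on these states.
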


\begin{proof}
Fix $n \ge 5$. From Proposition \ref{unextendible}, we know that $C_n \notin \mathcal{G}$. Since $\mathcal{G}$ is closed under induced subgraphs and complements, $G  \in \mathcal{G}$ implies that neither $G$ or $\overline{G}$ contains an induced cycle $C_n$. 
Since this is true for all $n\ge 5$, $G$ is weakly chordal.

On the other hand, the states in Example \ref{ex:not one way but three way} are a subset of the nonlocality without entanglement states in \cite{bennett1999quantum}, where it is shown that the first eight states cannot be perfectly distinguished with any finite number of rounds of LOCC, hence the graph representing them is not in $\mathcal{G}$. This graph $G$ is given in Figure \ref{fig:NLWE Graph} and is easily seen to be weakly chordal despite not being in $\mathcal{G}$. Hence the inclusion is proper. 
	\begin{figure}[h!]
		\begin{center}
			\begin{tikzpicture}[scale=1.15, baseline=(current bounding box.north)]
			
			
			\draw[fill] (-2,1)   node[anchor=south] {$1$}     circle (2pt);
			\draw[fill] (-2,0)   node[anchor=north] {$2$}     circle (2pt);
			
			\draw[fill] (-0.7,1) node[anchor=south] {$6$}   circle (2pt);
			\draw[fill] (-0.7,0) node[anchor=north] {$7$}   circle (2pt);
			
			\draw[fill] (0.7,0)  node[anchor=north] {$5$}   circle (2pt);

            \draw[fill] (0.7,1)  node[anchor=south] {$8$}   circle (2pt);

			\draw[fill] (2,0)    node[anchor=north] {$3$}     circle (2pt);
			\draw[fill] (2,1)    node[anchor=south] {$4$}   circle (2pt);
			
			\draw[thick] (-2,1)--(-2,0);
			\draw[thick] (-2,1)--(-0.7,1);
			\draw[thick] (-2,1)--(-0.7,0);
			
			\draw[thick] (-2,0)--(-0.7,0);
			\draw[thick] (-2,0)--(-0.7,1);
			
			\draw[thick] (-0.7,1)--(0.7,0);
			
			\draw[thick] (-0.7,0)--(0.7,0);
			\draw[thick] (0.7,1)--(-0.7,0);
			
			\draw[thick] (0.7,1)--(2,0);
			\draw[thick] (0.7,0)--(2,1);
			\draw[thick] (0.7,0)--(2,0);
			\draw[thick] (-2,1)--(2,1)--(2,0);
			
			\end{tikzpicture}
		\end{center}
			\caption{$G = G_A = G_B \notin \mathcal{G}$}
            \label{fig:NLWE Graph}
	\end{figure}
\end{proof}

   We indicate the known graph class relationships in Figure \ref{fig:GraphInclusions}.

	\begin{figure}[h!]
		\begin{center}
\scalebox{0.5}{\begin{tikzpicture}

    \draw[draw = black] (-1.2,0) circle (1.4);
      \draw[draw = black] (-1.2,0) circle (2.3);
     \draw[draw = black] (1.2,0) circle (2.3);
     \draw[draw = black] (0,1) circle (5);
      \draw[draw = black] (0,0.2) circle (4);
  \node at (0,5) {\large\textbf{Weakly Chordal}};
\node at (-3,0) {\LARGE\textbf{$\mathcal{G}_1$}};
    \node at (0,3.4) {\LARGE\textbf{$\mathcal{G}$}};
    \node at (-1.8,0) {\LARGE\textbf{$\mathcal{G}_0$}};
        \node at (2.1,-.2) {\large\textbf{Cographs}};
\end{tikzpicture}}
\end{center}
			\caption{Graph classes relative to $\mathcal{G}$. $\mathcal{G}_1$ is the set of chordal graphs, and $\mathcal{G}_0$ is the set of split graphs. }
\label{fig:GraphInclusions}
\end{figure}

\section{Conclusion and Outlook}

This work opens up several new lines of investigation. 
Most obvious is a direct continuation of the work we have begun on determining the structure of the class of distinguishable graphs for full LOCC. The current work is largely concerned with determining and deriving the basic graph closure properties of the set, along with relating the set of distinguishable graphs to other well-known graph classes. Natural next steps in this direction would be to conduct a deeper analysis on the distinguishability properties of specific subclasses of graphs, including those that arise in LOCC communication applications.   
Moreover, the results in \cite{KMNP2024} on one-way LOCC established links between local quantum measurements, graph theory, and the theory of matrix completions, and all of these directions can now be pursued in depth for full LOCC by building on this work.

In terms of matrix completions, we note that the set of representations of a graph corresponds via Gram matrices contained in the cone  of positive semidefinite matrices with a given zero pattern, matching that of the graph's adjacency matrix. Thus, every result about classes of graphs has a corresponding result in terms of adjacency matrices and their associated cones. We showed that $\mathcal{G}$ was closed under complements and disjoint unions. Recall the fact that $A(\overline{G})=J-I-A(G)$, where $A(G)$ is the adjacency matrix of $G$ and $J$ is the all 1's matrix, and  $A(G_1\sqcup G_2)=A(G_1)\oplus A(G_2)$. One can see how the roles of $G_A$ and $G_B$ are interchanged in $A(\overline{G})$ since all the off-diagonal entries get ``flipped''---0's become 1's and 1's become 0's. One can also see that the adjacency matrix  $A(G_1\sqcup G_2)$ is block-diagonal, leading to the orthogonal subspaces argument used herein. The final two operations simply reduce the adjacency matrix of the original graph; such reductions do not change the LOCC-distinguishability of the graph, as we saw in Section~\ref{sec:structure}. Similarly, to construct the adjacency matrix of an induced subgraph,  we simply delete the appropriate rows/columns: $A(G')=A(G)_{(-i,-i)}$, and the  adjacency matrix of an edge-preserving clique sum is constructed by combining  $A(G_1)$ and $A(G_2)$ by merging the rows and columns that correspond to the identified pairs of vertices in the shared $k$-clique. We showed that $\mathcal G$ is closed under all of these operations. 

We also showed that $\mathcal{G}$ is {\it not} closed under tensor and cartesian products. Looking at the adjacency matrices, we see that  $A(G_1\times G_2)=A(G_1)\otimes A(G_2)$ and $A(G_1\square G_2)=A(G_1)\otimes I+I\otimes A(G_2)$, operations which distort the adjacency matrices in a manner that does not preserve LOCC-distinguishability.  Another possible graph operation to explore is the $\ltimes$ product  defined in \cite{coutinho2016perfect} as a generalization of the double cover  operation, for which graphs $G_1$ and $G_2$ on the same vertex set generate a new graph with adjacency matrix $A(G_1 
\ltimes G_2)=\begin{pmatrix}
    A(G_1)& A(G_2)\\ A(G_2)& A(G_1)
\end{pmatrix}$. A characterization of when $\mathcal G$ is closed or not in terms of the adjacency matrices would prove useful, especially in cases when the graph operation, e.g., $\ltimes$, is defined via its adjacency matrix. 

On another topic, when we move the conversation from full quantum state discrimination to LOCC, deciding whether states are distinguishable is equivalent to finding a sequence of local measurements that eliminate some states but do not disturb the remaining states too much. In the case of one-way LOCC, in general each local measurement must eliminate the complement of a clique, and for one-way bipartite LOCC we know this is always possible if and only if the corresponding graph is chordal. This follows from the general result that any set of quantum states whose orthogonality graph is chordal can be reduced to an induced clique with a quantum measurement. 
As we move to two-way LOCC, we need to consider a sequence of nontrivial steps that accomplish discrimination. Each of these steps should, in general, be an orthogonality-preserving quantum measurement, and we can require that it be a nontrivial measurement so that our set of possible states post-measurement is strictly smaller than what we started with. This description is essentially the same whether we are talking about bipartite or multipartite discrimination. It only depends on the collection of local graphs. Thus, it would be natural to pursue an extension of the approach introduced here to multipartite LOCC.  
One can easily generalize from a two-party system $\mathcal H=\mathcal H_A\otimes \mathcal H_B$ to a $d$-party system $\mathcal H=\bigotimes_{i=1}^d \mathcal H_i$. A multipartite product state would then be of the form $\ket{\psi}=\ket{\psi_1}\otimes \ket{\psi_2}\otimes \cdots \otimes \ket{\psi_d}$. 
If we have $d$ parties and a set of multipartite product states, let $G_i = (V,E_i)$ be the local orthogonality graph associated with the $i$th party. 
The states are orthogonal if and only if $\bigcap_{i = 1}^d E_i$ is empty. 
The interplay between these graphs and the implications for distinguishability can then be considered. 
LOCC in a multipartite context is frequently quite challenging, but we think it should be possible to extend at least some of the graph theory approach to the problem of locally distinguishing sets of multipartite product states. 

One could also consider this work in the context of general quantum state discrimination, which is equivalent to asking how much one local party can discover without communication from anyone else. 
The problem of distinguishing among a set of quantum states is one of the oldest problems in quantum information theory. A set of pure states can be distinguished perfectly if and only if they are mutually orthogonal; but outside of this special circumstance, there are many possible ways to define optimal discrimination. The results in \cite{KMNP2024} incorporated the problem of identifying subsets of pure quantum states, but other paradigms exist such as minimum error, unambiguous, antidiscrimination, and orthogonality-preserving discrimination. Although some of these conditions have been demonstrated using graphs, there does not appear to have been attempts to classify graphs that always work as we have considered in this work.

We plan to undertake some of these investigations elsewhere, and we invite other interested researchers to do the same.

\vspace{0.1in} 

\noindent{\bf Acknowledgements.}
 D.W.K.\ was supported by NSERC Discovery Grant number 400160.  R.P. \ was supported by NSERC Discovery Grant number 400550. S.P.\ was supported by NSERC Discovery Grant number 1174582, the Canada Foundation for Innovation (CFI) grant number 35711, and the Canada Research Chairs (CRC) Program grant number 231250.

\bibliographystyle{plain}
\bibliography{references}

\end{document}